\definecolor{citecolor}{HTML}{0000C0}
\definecolor{urlcolor}{HTML}{000080}
\newtheorem{theorem}{Theorem}
\newtheorem{lemma}{Lemma}
\newtheorem{corollary}{Corollary}
\theoremstyle{remark}
\newcommand{\namedref}[2]{\hyperref[#2]{#1~\ref*{#2}}}
\newcommand{\sectionref}[1]{\namedref{Section}{#1}}
\newcommand{\theoremref}[1]{\namedref{Theorem}{#1}}
\newcommand{\corollaryref}[1]{\namedref{Corollary}{#1}}
\newcommand{\figureref}[1]{\namedref{Figure}{#1}}
\newcommand{\lemmaref}[1]{\namedref{Lemma}{#1}}
\renewcommand{\vec}[1]{\mathbf{#1}}
\newcommand{\totaln}{\eta}
\DeclareMathOperator{\majority}{majority}
\DeclareMathOperator{\inc}{increment}
\DeclareMathOperator{\polylog}{polylog}
\DeclareMathOperator{\E}{\mathbf E}
\DeclareMathOperator{\N}{\mathbb N}
\newenvironment{mycover}
               {\list{}{\listparindent 0pt
                        \itemindent    \listparindent
                        \leftmargin    0pt
                        \rightmargin   0pt
                        \parsep        0pt}%
                \raggedright
                \item\relax}
               {\endlist}
\begin{document}

\hypersetup{
    pdfauthor={Christoph Lenzen, Joel Rybicki, Jukka Suomela},
    pdftitle={Towards Optimal Synchronous Counting},
}

\begin{mycover}
{\LARGE \textbf{Towards Optimal Synchronous Counting}\par}

\bigskip
\bigskip

\medskip
\textbf{Christoph Lenzen}

\smallskip
{\small Department of Algorithms and Complexity, \\
Max Planck Institute for Informatics\par}

\bigskip
\textbf{Joel Rybicki}

\smallskip
{\small Helsinki Institute for Information Technology HIIT, \\
Department of Computer Science, Aalto University\par

\medskip
Department of Algorithms and Complexity, \\
Max Planck Institute for Informatics\par}

\bigskip
\textbf{Jukka Suomela}

\smallskip
{\small Helsinki Institute for Information Technology HIIT, \\
Department of Computer Science, Aalto University\par}

\end{mycover}

\bigskip
%###
\paragraph{Abstract.}
%###

Consider a complete communication network of $n$ nodes, where the nodes receive a common clock pulse. We study the synchronous $c$-counting problem: given any starting state and up to $f$ faulty nodes with arbitrary behaviour, the task is to eventually have all correct nodes counting modulo $c$ in agreement. Thus, we are considering algorithms that are self-stabilizing despite Byzantine failures. In this work, we give new algorithms for the synchronous counting problem that (1)~are deterministic, (2)~have linear stabilisation time in $f$, (3)~use a small number of states, and (4)~achieve almost-optimal resilience. Prior algorithms either resort to randomisation, use a large number of states, or have poor resilience. In particular, we achieve an \emph{exponential} improvement in the space complexity of deterministic algorithms, while still achieving linear stabilisation time and almost-linear resilience.

\thispagestyle{empty}
\setcounter{page}{0}
\newpage

\section{Introduction}\label{sec:intro}

In this work, we design space-efficient, self-stabilising, Byzantine fault-tolerant algorithms for the \emph{synchronous counting problem}. We are given a complete communication network on $n$ nodes, with arbitrary initial states. There are up to $f$ faulty nodes. The task is to synchronise the nodes so that all non-faulty nodes will count rounds modulo~$c$ in agreement. For example, here is a possible execution for $n = 4$ nodes, $f = 1$ faulty node, and counting modulo $c = 3$; the execution stabilises after $t = 5$ rounds:
\begin{center}
    \begin{tabular}{@{}l@{\ \quad}lllllllllll@{\ \ }ll@{}}
    & \multicolumn{5}{@{}l}{Stabilisation}
    & \multicolumn{7}{l}{Counting} \\
    \cmidrule[\heavyrulewidth](r){2-6}
    \cmidrule[\heavyrulewidth](lr){7-13}
    Node 1: & 2 & 2 & 0 & 2 & 0 & 0 & 1 & 2 & 0 & 1 & 2 & \ldots\! \\
    Node 2: & 0 & 2 & 0 & 1 & 0 & 0 & 1 & 2 & 0 & 1 & 2 & \ldots\! \\
    Node 3: & \multicolumn{11}{@{}l}{\emph{faulty node, arbitrary behaviour}} & \ldots\! \\
    Node 4: & 0 & 0 & 2 & 0 & 2 & 0 & 1 & 2 & 0 & 1 & 2 & \ldots\! \\
    \end{tabular}
\end{center}

Synchronous counting is a coordination primitive that can be used e.g.\ in large integrated circuits to synchronise subsystems so that we can easily implement \emph{mutual exclusion} and \emph{time division multiple access} in a fault-tolerant manner. Note that in this context it is natural to assume that a synchronous clock signal is available, but the clocking system usually will not provide explicit round numbers. Solving synchronous counting thus enables us to construct highly dependable round numbers for subcircuits.

Counting modulo $c = 2$ is closely related to \emph{binary consensus}: given a synchronous counting algorithm one can design a binary consensus algorithm and vice versa \cite{dolev07actions,dolev13counting,dolev15counting}---in particular, many lower bounds on binary consensus apply here as well \cite{dolev85bounds,pease80reaching,fischer82lower}. However, the existing implementations of counting from consensus incur a factor-$f$ overhead in space and message size, rendering them very costly even in small systems.

\begin{table}[b!]
\center
\begin{tabular}{@{}l@{\quad}l@{\quad}l@{\quad}l@{\quad}l@{}}
  \toprule
  resilience & stabilisation time & state bits & deterministic & references \\
  \midrule
  $f<n/3$ & $O(f)$ & $O(f \log f)$ & yes & \cite{dolev07actions} \\ 
  $f<n/3$ & $2^{2(n-f)}$ & 2 & no & \cite{dolev00self-stabilization,dolev04clock-synchronization} \\
  $f<n/3$ & $\min \{ 2^{2f+2} + 1, 2^{O(f^2/n)} \}$ & 1 & no & \cite{dolev15counting} \\
  $f=1$, $n\geq 4$ & 7 & 2 & yes & \cite{dolev15counting} \\
  $f=1$, $n\geq 6$ & 6 & 1 & yes & \cite{dolev15counting} \\
  $f=1$, $n\geq 6$ & 3 & 2 & yes & \cite{dolev15counting} \\
  \midrule
  $f = n^{1-o(1)}$ & $O(f)$ & $O(\log^2 f)$ & yes & {\bf this work} \\
  \bottomrule
\end{tabular}
\caption{Summary of synchronous $2$-counting algorithms. For randomised algorithms, we list the expected stabilisation time. In~\cite{dolev15counting}, further trade-offs between $n$, the stabilization time, and the number of state bits are given (for $f=1$).}\label{table:algorithms}
\end{table}

%###
\paragraph{Prior Work.}
%###

It is fairly easy to design space-efficient \emph{randomised} algorithms for synchronous counting~\cite{dolev00self-stabilization,dolev04clock-synchronization,dolev15counting}: as a simple example, the nodes can just pick random states until a clear majority of them has the same state, after which they start to follow the majority. However, it is much more challenging to come up with space-efficient \emph{deterministic} algorithms for synchronous counting~\cite{dolev07actions,dolev13counting,dolev15counting}, and it remains open to what extent randomisation helps in designing space-efficient algorithms that stabilise quickly. Fast-stabilising algorithms build on a connection between Byzantine consensus and synchronous counting, but require a large number of states per node~\cite{dolev07actions,dolev15counting}.

For small values of the parameters (e.g., $n = 4$ and $f = 1$) the synchronous counting problem is amenable to \emph{algorithm synthesis}: it is possible to use computers to automatically design both space-efficient and time-efficient deterministic algorithms for synchronous counting. For example, there is a computer-designed algorithm that solves the case of $n \ge 4$ and $f = 1$ with only $3$ states per node, and another algorithm that solves the case of $n \ge 6$ and $f = 1$ with only $2$ states per node; both of these are optimal~\cite{dolev13counting,dolev15counting}. Unfortunately, this approach does not scale, as the space of possible algorithms for given parameters $n$ and $f$ grows rapidly with $n$.

In summary, currently no algorithms for synchronous counting are known that simultaneously scale well in terms of resilience $f$, stabilisation time $t$, and memory requirements. Here, it is worth noting that existing solutions with large memory requirements in essence run up to $\Omega(f)$ concurrent instances of consensus, which implies that the respective overhead extends to message size and the amount of local computations as well.

%###
\paragraph{Contributions.}
%###

Our main contribution is a recursive construction that shows how to ``amplify'' the resilience of a synchronous counting algorithm. Given a synchronous counter for some values of $n$ and $f$, we will show how to design synchronous counters for larger values of $n$ and $f$, with a very small increase in time and space complexity. This has two direct applications:
\begin{enumerate}
    \item From a practical perspective, we can use existing computer-designed algorithms (e.g.\ $n = 4$ and $f = 1$) as a building block in order to design efficient deterministic algorithms for a moderate number of nodes (e.g., $n = 36$ and $f = 7$).
    \item From a theoretical perspective, we can now design deterministic algorithms for synchronous counting for any $n$ and for $f \leq n^{1-o(1)}$ faulty nodes, with a stabilisation time of $O(f)$, and with only $O(\log^2 f / \log \log f)$ bits of state per node.
\end{enumerate}
The space complexity is an \emph{exponential} improvement over prior work, and the stabilization time is asymptotically optimal for deterministic algorithms~\cite{fischer82lower}. A summary of the related work and our contributions is given in Table~\ref{table:algorithms}.

In our deterministic algorithms, each node broadcasts its state to all other nodes in each round. However, the small number of state bits bears the promise that the communication load can be reduced further. To substantiate the conjecture that finding algorithms with small state complexity may lead to highly communication-efficient solutions, we proceed to consider a slightly stronger synchronous \emph{pulling model}. In this model, a node may send a request to another node and receive a response in a single round, based on the state of the responding node at the beginning of the round. The cost for the exchange is then attributed to the pulling node; in a circuit, this translates to each node being assigned an energy budget that it uses to ``pay'' for the communication it triggers. In this model, it is straightforward to combine our recursive construction with random sampling to obtain the following results:
\begin{enumerate}
    \item We can achieve the properties of the deterministic algorithm with each node pulling $\polylog n$ messages in each round. The price is that the resulting algorithm retains a probability of $n^{-\polylog n}$ to fail in each round even after stabilisation.
    \item If the failing nodes are chosen independently of the algorithm, we can fix the random choices. This results in a pseudo-random algorithm which stabilises with a probability of $1-n^{-\polylog n}$ and in this case keeps counting correctly.
\end{enumerate}

%###
\paragraph{Structure.}
%###

This paper is organised as follows. In Section~\ref{sec:preliminaries} we formally define the model of computing and the synchronous counting problem. Section~\ref{sec:construction} gives the main technical result---a construction for creating a synchronous counting algorithm of larger resilience from several copies of an algorithm with smaller resilience. Section~\ref{sec:recursion} uses this construction to derive deterministic synchronous counting algorithms with linear stabilisation time and polylogarithmic space complexity. Finally, in Section~\ref{sec:random}, we discuss the pulling model and how randomised sampling can be used to reduce the total number of communicated bits.

\section{Preliminaries}\label{sec:preliminaries}

%###
\paragraph{Model of Computation.}
%###

We consider a fully-connected distributed message-passing system consisting of $n$ processors, also called nodes, with unique identifiers from the set $[n] = \{0, 1, \dots, n-1 \}$. The computation proceeds in synchronous communication rounds, where in each round each processor:
\begin{enumerate}[noitemsep]
 \item \emph{broadcasts} its local state to all processors,
 \item \emph{receives} a vector of messages (that is, states) from all other processors, and
 \item \emph{updates} its local state according to the received messages.
\end{enumerate}
However, the initial state of every node is arbitrary. Moreover, up to $f$ nodes may be \emph{Byzantine}, i.e., exhibit arbitrary behaviour, including to send \emph{different} messages to every node. Thus, different nodes may receive different vectors depending on what the Byzantine nodes do. 

%###
\paragraph{Algorithms.}
%###

A deterministic algorithm in this model is a tuple $\vec A = (X,g,h)$, where $X$ is the set of all possible states for a node, $g \colon [n] \times X^n \to X$ is the state transition function, and $h \colon [n] \times X \to [c]$ maps the internal state of a node to an output value. That is, when node $i \in [n]$ receives a vector $\vec x \in X^n$ of messages, it will update its internal state to $g(i, \vec x) = s$ and output $h(i, s) \in [c]$.

The \emph{space complexity} $S(\vec A)$ of an algorithm $\vec A$ is the total number of bits required to store the state of a node. That is, $S(\vec A) = \left\lceil \log |X| \right\rceil$.

%###
\paragraph{Executions.}
%###

For any given set ${\cal F} \subseteq [n]$ of faulty nodes, we define a projection $\pi_{\cal F}$ as follows: for any received message vector $\vec x \in X^n$, let $\pi_{\cal F}(\vec x)$ be a vector $\vec e$ such that $e_i = *$ if $i \in {\cal F}$ and $e_i = x_i$ otherwise. We call $\pi_{\cal F}(X^n) = \{ \pi_{\cal F}(\vec x) : \vec x \in X^n \}$ the set of \emph{configurations}.  That is, a configuration consists only of the state of all non-faulty nodes.

We say that a configuration $\vec d$ is \emph{reachable} from configuration $\vec e$ if for every non-faulty node $i \notin {\cal F}$ there exists some $\vec x \in X^n$ satisfying $\pi_{\cal F}(\vec x) = \vec d$ and $g(i, \vec x) = d_i$. Intuitively, this means that the Byzantine nodes can send node $i$ such messages that $i$ chooses to switch to state $d_i$ when the system is in configuration $\vec e$.

 An \emph{execution} of a given algorithm $\vec A$ for a given set of faulty nodes, is an infinite sequence of configurations $\xi = \langle \vec e_0, \vec e_1, \vec e_2, \dots \rangle$ such that $\vec e_{t+1}$ is reachable from~$\vec e_t$.

%###
\paragraph{Synchronous Counters.}
%###
We say that an execution $\xi$ of algorithm $\vec A = (X, g, h)$ \emph{stabilises} in time $t$ if there is some $r_0\geq 0$ so that for every non-faulty node $i \in [n]$ the output satisfies
$$
 h(i, x_{t+r,i}) = r - r_0 \bmod c \textrm{ for all } r \ge 0.
$$ 
That is, within $t$ steps all non-faulty nodes agree on a common output and start incrementing their counters modulo $c$ each round. Moreover, we say that $\vec A$ is a \emph{synchronous $c$-counter} with \emph{resilience} $f$ if there exists a $t$ such that for every ${\cal F} \subseteq [n]$, $|{\cal F}| \le f$ all executions of $\vec A$ stabilise in $t$ rounds. We say that the \emph{stabilisation time} of algorithm $\vec A$ is $T(\vec A) \le t$. In the following, we use $\mathcal{A}(n,f,c)$ to denote the family of synchronous $c$-counters that run on $n$ nodes. 

\section{Boosting Resilience}\label{sec:construction}

In this section, we show that given a family of synchronous $c$-counters for a small number of nodes $n$ and resilience $f$, we can construct a new family of $C$-counters of a larger resilience without increasing the number of nodes in the network or the stabilisation time by too much. More precisely, we prove the following main result.
\pagebreak % LAYOUT
\begin{theorem}\label{thm:layering}
Given $n,f\in \N$, pick new parameters $N,F,C\in \N$, where
\begin{itemize}[noitemsep]
  \item the number of nodes is $N=kn$ for some number of blocks $3\leq k\in \N$,
  \item the resilience is $F< (f+1)m$, where we abbreviate $m=\lceil k/2\rceil$, and
  \item the counter size is $C>1$.
\end{itemize}
Choose any $c\in \N$ that is a multiple of $3(F+2) (2m)^k$. Then for any $\vec{A}\in \mathcal{A}(n, f, c)$ there exists a $\vec B \in \mathcal{A}(N,F,C)$ with
\begin{align*}
 T(\vec B) &\leq  T(\vec A) + 3(F+2) (2m)^k, \\
 S(\vec B) &= S(\vec A) + \left\lceil \log (C+1) \right\rceil + 1.
\end{align*}
\end{theorem}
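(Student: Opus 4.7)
The idea is to build $\vec B$ by partitioning the $N=kn$ nodes into $k$ disjoint \emph{blocks} of $n$ nodes each, running an independent copy of $\vec A$ inside every block, and then using the $k$ resulting $c$-counters to drive one common $C$-counter at every non-faulty node. A node's state in $\vec B$ will consist of (i) its $\vec A$-state for the instance in its own block, (ii) a $(C+1)$-valued register that stores either the $C$-counter or a distinguished ``undefined'' symbol, and (iii) a single auxiliary bit tracking progress through the outer protocol; this already gives the claimed space bound $S(\vec A)+\lceil\log(C+1)\rceil+1$.

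The first step is a pigeonhole argument on the faults. Since $F<(f+1)m$, strictly fewer than $m$ of the $k$ blocks can contain more than $f$ faulty nodes, so at least $k-m+1=\lfloor k/2\rfloor+1$ blocks contain at most $f$ faults each; call these blocks \emph{good}. Good blocks are thus a strict majority of all blocks. By the hypothesis $\vec A\in\mathcal{A}(n,f,c)$, within $T(\vec A)$ rounds each good block stabilises, so that its non-faulty members agree on a common $c$-valued counter that is incremented every round. Different good blocks may have different offsets modulo $c$, and bad blocks may report arbitrary values; but because the network is complete, every non-faulty node sees the full broadcast of every other node and can therefore extract a well-defined ``apparent'' block counter for every block by intra-block majority, which is the correct value for every good block.

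The main step is then to use these $k$ synchronously-ticking sources, a strict majority of which are reliable, to converge on a single offset for a global $C$-counter. I would implement this as a layered majority filter that runs once per ``super-round'' of length $c$: each node proposes a shift of its $C$-counter, the auxiliary bit marks progress through the filter, and the $(C+1)$-valued register is overwritten only when a sufficiently large set of blocks agrees on a candidate. The hypothesis that $c$ is a multiple of $3(F+2)(2m)^k$ is precisely what ensures that a complete run of this filter fits inside a single $c$-cycle, so that majority-over-blocks can be applied consistently even though the good blocks may sit at different offsets within $[c]$.

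The main obstacle is proving the additive time bound $3(F+2)(2m)^k$ for this second phase. I expect it to arise from a nested recursion: at each of $k$ levels the filter must sample every block sufficiently often to outvote the at most $m-1$ faulty blocks, contributing a factor of $2m$ per level, while the leading factor $3(F+2)$ rules out the candidate shifts that survive the outermost layer. The delicate point will be showing that this cascade of majorities does terminate with all non-faulty nodes holding the same $C$-counter, and that once aligned, no combination of behaviours by the bad blocks can trigger a further reset---essentially, checking that the strict-majority bound on good blocks is preserved under every level of the filter and that the ``undefined'' symbol of the $(C+1)$-register is used only during the transient.
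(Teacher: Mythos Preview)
Your setup is right: the block partition, the pigeonhole argument yielding a strict majority of good blocks, the intra-block majority to read a good block's counter, and the decomposition of $S(\vec B)$ all match the paper exactly. The gap is in the second phase, where your ``layered majority filter'' is too vague to carry the argument, and your reading of the constants $3(F+2)$ and $(2m)^k$ does not match what actually happens.

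The difficulty you have not addressed is that the good blocks stabilise to \emph{arbitrary, unrelated} offsets in $[c]$. A majority vote over block-proposed $C$-values therefore has no reason to converge: there is no round at which a majority of good blocks naturally proposes the same thing, and the bad blocks can always tip any near-tie. The paper sidesteps this by never voting on $C$-values directly. Instead it interprets block $i$'s counter as a pointer $b[i,\cdot]\in[m]$ to a candidate leader block, arranged so that block $i$ changes its pointer a factor $2m$ more slowly than block $i-1$. This differential cycling (not any majority-sampling argument) is where the $(2m)^k$ comes from: it guarantees that within $c_k=\tau(2m)^k$ rounds there is a window of $\tau=3(F+2)$ rounds in which all good blocks simultaneously point to the same leader $\beta\in[m]$, and in particular at some point to a \emph{non-faulty} $\beta$. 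During that window every non-faulty node, even one sitting in a bad block, can compute the same leader $B=\beta$ by majority-over-blocks of pointers and then read a common round counter $R\in[\tau]$ from block $\beta$.

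The factor $3(F+2)$ is then not about ``ruling out candidate shifts'' but is exactly the running time of the phase king consensus protocol: $F+2$ phases of three rounds each, indexed by $R$. The $(C+1)$-register and the single auxiliary bit in your space count are the phase king state ($a[v]\in[C]\cup\{\infty\}$ and $d[v]\in\{0,1\}$). The paper shows separately that one clean pass of phase king with a non-faulty king establishes agreement on the $C$-counter (Lemma~\ref{lemma:establish_agreement}) and that once established, agreement persists regardless of what $R$ does afterwards (Lemma~\ref{lemma:maintain_agreement}). Your sketch lacks both a mechanism to obtain a common round counter across all non-faulty nodes and a concrete consensus routine driven by it; filling those in is the substance of the proof.
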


%###
\subsection{High-level Idea of the Construction}
%###
Given a suitable counter $\vec A \in \mathcal{A}(n,f,c)$, we construct a larger network with $N=kn$ nodes and divide the nodes in $k$ blocks of $n$ nodes. Each block runs a copy of $\vec A$ that is used to (1)~determine a ``leader block'' and (2)~output a consistent round counter \emph{within} a block.

Due to the bound on $F$, only a minority of the blocks (fewer than $m$) contains more than $f$ faulty nodes. Thus, the counters of a majority of the blocks stabilise within $T(\vec A)$ rounds. By dividing the counters within each block, we let them ``point'' to one of $m$ possible leader blocks for a fairly large number of consecutive rounds. Block $i$ switches through leaders by a factor of $2m$ faster than block $i+1$. This ensures that, eventually, all stabilised counters point to the same leading block for $3(F+2)$ rounds.

Using a majority vote on the pointers to leader blocks, we can make sure that all correct nodes---also those in blocks with more than $f$ faulty nodes---will recognise the same block as the leader for sufficiently long. In fact, eventually this will happen for each of the $m$ blocks that may become leaders, one of which must have fewer than $f$ faulty nodes. In particular, its counter will be stabilised and can be ``read'' by all other nodes based on taking the majority value within that block. Hence, we can guarantee that at some point all nodes agree on a common counter value for at least $3(F+2)$ rounds. This value is used to control an execution of the well-known \emph{phase king protocol}~\cite{berman89consensus}, which solves consensus in $\Theta(F)$ rounds in systems of $N>3F$ nodes. As the constraint $F<(f+1)m$ also ensures that $F<N/3$, we can use the phase king protocol to let all nodes agree on the current value of the $C$-counter that is to be computed. It is straightforward to guarantee that, once this agreement is achieved, it will not be lost again and the counter is incremented by one modulo $C$ in each round.

%###
\subsection{Setup}
%###

Our goal is to construct a synchronous $C$-counter that runs in a network of $N=kn$ nodes and tolerates $F < (f+1)m$ faults. We divide the nodes into $k$ blocks, each of size $n$. We identify each node $v \in [kn]$ with a tuple $(i,j) \in [k] \times [n]$. Thus, node $v = (i,j)$ is the $j$th node of block~$i$. Blocks that contain more than $f$ faulty nodes are said to be \emph{faulty}. Otherwise, a block is \emph{non-faulty}.

We will have each block $i$ run a synchronous counter $\vec A_i$ constructed as follows. Define $\tau=3(F+2)$ and $c_i = \tau (2m)^{i+1}$ for $i \in [k]$. Let $\vec A = (X,g,h) \in \mathcal{A}(n,f,c)$ be the given synchronous $c$-counter, where $c = \alpha \tau (2m)^k$ for some integer $\alpha$. Now for any $i \in [k]$, we obtain a synchronous $c_i$-counter $\vec A_i = (X, g, h_i)$ by defining the output function as $h_i(x) = h(x) \bmod c_i$. That is, $\vec A_i$ outputs the output of $\vec A$ modulo $c_i$. It follows that $T(\vec A_i) = T(\vec A)$ and $S(\vec A_i) = S(\vec A)$. Thus, if block $i$ is non-faulty, then $\vec A_i$ will stabilise in $T(\vec A)$ rounds.

We interpret the value of the counter of (non-faulty, stabilised) block $i$ as a tuple $(r, y) \in [\tau] \times [(2m)^{i+1}]$, where $r$ is incremented by one modulo $\tau$ each round, and $y$ is incremented by one whenever $r$ ``overflows'' to $0$. We refer to the counter value that node $(i,j)$ currently has according to $\vec A_i$ as $(r[i,j],y[i,j])$. Note that this value can be directly inferred from its state, so by broadcasting its state $(i,j)$ implicitly announces its current counter value to all other nodes. We stress that there is no guarantee whatsoever on how these variables behave in faulty blocks, even for non-faulty nodes in faulty blocks---the only guarantee is that non-faulty nodes in non-faulty blocks will count correctly after round $T(\vec A)$. 

We define the short-hand
\[
b[i,j] = \left \lfloor \frac{y[i,j]}{(2m)^{i}} \right \rfloor \bmod m.
\]
The value $b[i,j]$ indicates the block that the nodes in block $i$ currently consider to be the ``leader block'' by interpreting the counter given by $\vec A_i$ appropriately. 

Let $b_q[i,j]$ and $r_q[i,j]$ denote the values of $b[i,j]$ and $r[i,j]$ in round $q$. We will now show that after stabilisation, a non-faulty block $i$ will within $c_i$ rounds point to every block $\beta \in [m]$ for at least $\tau$ consecutive rounds. For notational convenience, we let $c_{-1} = \tau$.

\begin{lemma}\label{lemma:consecutive}
Let $i \in [k]$ be a non-faulty block and $t \ge T(\vec A)$. For any $\beta \in [m]$, there exists some $t \le w \le t + c_i - c_{i-1}$ such that if $(i,j)$ is non-faulty, then $b_q[i,j] = \beta$ for $w \le q < w + c_{i-1}$.
\end{lemma}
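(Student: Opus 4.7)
The plan is to encode the counter of a stabilized non-faulty block $i$ as a single integer modulo $c_i$ and reduce the lemma to a statement about a periodic function of time. For $t \ge T(\vec A)$ and any non-faulty node $(i,j)$, the pair $(r[i,j], y[i,j]) \in [\tau] \times [(2m)^{i+1}]$ is exactly the base-$\tau$ representation of an integer $v[i,j] = r[i,j] + \tau \cdot y[i,j] \in [c_i]$ that, by the stabilisation guarantee of $\vec A_i$, increments by $1$ modulo $c_i = \tau (2m)^{i+1}$ each round. A short calculation, which also handles the corner case $i=0$ via the convention $c_{-1} = \tau$, gives
\[
b[i,j] \;=\; \left\lfloor \frac{v[i,j]}{c_{i-1}} \right\rfloor \bmod m.
\]
Partitioning $[c_i]$ into $2m$ contiguous blocks of length $c_{i-1}$, we therefore have $b[i,j] = \beta$ iff $v[i,j]$ lies in the $\beta$-th or the $(m+\beta)$-th of these blocks.

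Next, I would analyse the time evolution. As $q$ ranges over $[t, t+c_i)$, the value $v_q[i,j]$ traverses each element of $[c_i]$ exactly once. Hence the set of rounds in this window where $b_q[i,j] = \beta$ consists of two cyclic arcs of length $c_{i-1}$ each. Let $w_1, w_2 \in [t, t+c_i)$ denote their starting rounds; since the two preimage blocks in $[c_i]$ are exactly $m \cdot c_{i-1}$ apart, we have $w_2 - w_1 \equiv m c_{i-1} \pmod{c_i}$. Whenever $w_\ell \le t + c_i - c_{i-1}$, the entire arc $[w_\ell, w_\ell + c_{i-1})$ fits inside the linear window, and we obtain $b_q[i,j] = \beta$ for $w_\ell \le q < w_\ell + c_{i-1}$, which is exactly the conclusion of the lemma.

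The only subtlety, and the place that requires the assumption $k \ge 3$, is that an arc could be ``cut'' by the right endpoint of the window, i.e., start at some $w_\ell \in (t + c_i - c_{i-1},\, t + c_i)$. I would resolve this by arguing that at most one of the two arcs can be cut: their starting times differ by $m c_{i-1}$ modulo $c_i = 2m c_{i-1}$, and since $m = \lceil k/2 \rceil \ge 2$ we have $m c_{i-1} \ge 2 c_{i-1} > c_{i-1}$, which rules out both starts lying in an interval of length $c_{i-1} - 1$. Consequently at least one of $w_1, w_2$ lies in $[t,\, t + c_i - c_{i-1}]$, and that arc provides the required window $w$. The main obstacle is simply the careful cyclic-versus-linear bookkeeping of the counter; once the one-of-two-arcs-is-safe observation is made, no further work is needed.
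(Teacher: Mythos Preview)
Your proof is correct. Both you and the paper exploit the same structural fact: after stabilisation, $b[i,j]$ cycles through $[m]$ exactly twice per $c_i$-period, holding each value for $c_{i-1}$ consecutive rounds. The difference is only in how the window $w$ is extracted. The paper is constructive: it locates the next ``half-period reset'' $u\in[t,t+c_i/2]$ (the first round with $r=0$ and $y\in\{0,c_i/(2\tau)\}$, where $b$ returns to $0$) and then sets $w=u+\beta c_{i-1}$, checking $w+c_{i-1}\le t+c_i$. You instead argue by pigeonhole on the two length-$c_{i-1}$ arcs, showing that their starts are $m c_{i-1}$ apart and hence cannot both fall in the short ``cut'' zone $(t+c_i-c_{i-1},\,t+c_i)$. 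Your route is arguably cleaner in that it avoids naming the reset point and sidesteps the bookkeeping of where exactly the half-period begins; the paper's route has the minor advantage of producing an explicit $w$. Neither needs any idea the other lacks.

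One small remark: your invocation of $k\ge 3$ (hence $m\ge 2$) to get $mc_{i-1}>c_{i-1}$ is more than you need---the cut zone contains only $c_{i-1}-1$ integers, so even $m=1$ would suffice---but the hypothesis is available, so this is harmless.
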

\begin{proof}
By round $t$, the counter $\vec A_i$ has stabilised and all non-faulty nodes $(i,j)$ agree. First observe that since $\vec A_i$ is a $c_i$-counter, $b[i,j]$ cycles through the set $[m]$ twice in $c_i$ rounds. Moreover, once $b[i,j]$ changes its value, it will keep the new value for $c_{i-1} = \tau(2m)^{i}$ consecutive rounds.

More formally, let $(i,j)$ be a non-faulty node and $u \ge t$ be the minimal $u$ such that $r_u[i,j] = 0$ and $y_u[i,j] \in \{0, c_i/(2 \tau)\}$. In particular, at round $u$ we have $b_u[i,j] = 0$. Moreover, $u \le t + c_i/2$. Since $b[i,j]$ retains the same value for $c_{i-1}$ consecutive rounds, we get that at round $w = u + (\beta-1)c_{i-1}$ the value $b[i,j]$ changes to $\beta$. Now we have that $b_{w'}[i,j] = \beta$ for $w \le w' < w + c_{i-1}$. A simple check confirms that $w + c_{i-1} \le t + c_i$.
\end{proof}

Using this lemma, we can show that after stabilisation, all non-faulty nodes in non-faulty blocks will within $c_k$ rounds point to each $\beta \in [m]$ simultaneously for at least $\tau$ rounds.

\begin{lemma}\label{lemma:rounds}
Let $t \ge T(\vec A)$ and $\beta \in [m]$. There exists some $t \le u \le t+c_k - \tau$ such that if $(i,j)$ is a non-faulty node in a non-faulty block, then $b_q[i,j] = \beta$ for $u \le q < u + \tau$.
\end{lemma}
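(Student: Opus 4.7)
The plan is to iteratively nest the windows provided by Lemma~\ref{lemma:consecutive}, processing the non-faulty blocks from the largest index downwards. The intuition is that block $i$ cycles through $[m]$ on a timescale of $c_i$ with hold-time $c_{i-1}$, so when $i_j < i_{j-1}$ the entire ``leader cycle'' of block $i_j$ fits inside a single hold-window of block $i_{j-1}$; hence we can always carve a shorter sub-window in which $i_j$ also points to $\beta$.

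Concretely, I would order the non-faulty blocks as $I=\{i_1 > i_2 > \cdots > i_s\}\subseteq [k]$ (which is non-empty since at most $\lfloor F/(f+1)\rfloor < m$ blocks can be faulty) and set $u_0 = t$. Then, for $j = 1,2,\ldots,s$, I would invoke Lemma~\ref{lemma:consecutive} on block $i_j$ with starting round $u_{j-1} \ge T(\vec A)$ to obtain $u_j$ with
\[
u_{j-1} \le u_j \le u_{j-1} + c_{i_j} - c_{i_j-1},
\]
such that all non-faulty nodes $(i_j,\cdot)$ have $b_q[i_j,\cdot] = \beta$ throughout $[u_j,u_j+c_{i_j-1})$.

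The key invariant, proved by induction on $j$, is that for \emph{every} $j'\le j$, all non-faulty nodes in block $i_{j'}$ satisfy $b_q[\cdot]=\beta$ throughout $[u_j,u_j+c_{i_j-1})$. The step uses $i_j \le i_{j-1}-1$, which (via the geometric growth $c_i=\tau(2m)^{i+1}$) gives $c_{i_j}\le c_{i_{j-1}-1}$. Combined with Lemma~\ref{lemma:consecutive}, this yields
\[
u_j + c_{i_j-1} \;\le\; u_{j-1} + c_{i_j} \;\le\; u_{j-1} + c_{i_{j-1}-1},
\]
so the $j$-th window lies inside the $(j-1)$-th, and together with $u_j \ge u_{j-1}$ this transports the induction hypothesis to the smaller window.

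Setting $u = u_s$, the invariant says every non-faulty block points to $\beta$ during $[u, u+c_{i_s-1})$, and since $i_s\ge 0$ the length is at least $c_{-1}=\tau$. For the upper bound on $u$, the per-step shifts telescope:
\[
u - t \;\le\; \sum_{j=1}^{s}(c_{i_j}-c_{i_j-1}) \;\le\; \sum_{i=0}^{k-1}(c_i - c_{i-1}) \;=\; c_{k-1} - \tau \;\le\; c_k - \tau,
\]
which is the claimed bound. The only real obstacle is bookkeeping: making sure the nested-window inclusion and the telescoping shift bound both use the correct direction of the inequality $i_j < i_{j-1}$ on indices, and verifying that the ``innermost'' window is still at least $\tau$ long, which is exactly what the convention $c_{-1}=\tau$ encodes.
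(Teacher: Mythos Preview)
Your proof is correct and follows essentially the same approach as the paper: both arguments repeatedly apply Lemma~\ref{lemma:consecutive} to nest windows, using that the hold-time $c_{i-1}$ of a block with larger index contains an entire leader-cycle $c_{i'}$ of any block with smaller index. The only cosmetic difference is that you iterate explicitly over the non-faulty blocks in decreasing index order and telescope the shifts, whereas the paper first reduces to the case where all blocks are non-faulty and then runs an induction on $h$ in which the hypothesis is re-applied with a shifted starting time; both arrive at the same bound $u \le t + c_{k-1} - \tau \le t + c_k - \tau$.
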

\begin{proof}
We prove the statement for the special case that all blocks are non-faulty. As there is no interaction between the algorithm $\vec A_i$ of different blocks, the the lemma then follows by simply excluding all nodes in faulty blocks. We prove the following claim using induction; see Figure~\ref{fig:lemmaproof} for illustration.

Claim: For any round $t \ge T(\vec A)$, block $\beta \in [m]$ and $h \in [k]$, there exists a round $t \le u(h) \le t + c_h - \tau$ such that for all non-faulty nodes $(i,j)$ where $i \le h$ we have $b_q[i,j] = \beta$ for $u(h) \le q < u(h)+\tau$.

In particular, the lemma follows from the case $h=k$ of the above claim. For the base case of the induction, observe that case $h=0$ follows from \lemmaref{lemma:consecutive} as $c_{-1} = \tau$. For the inductive step, suppose the claim holds for some $h \in [k-1]$ and consider the non-faulty block $h+1$. By \lemmaref{lemma:consecutive} there exists $t \le w \le t + c_{h+1} - c_h$ such that $b_q[h+1,j] = \beta$ for all $w \le q < w + c_h$ where $(h+1,j)$ is a non-faulty node. Applying the induction hypothesis to $w$ we get $u(h)$. Setting $u(h+1) = u(h)$ yields that $u(h+1) \le w + c_h - \tau \le t + c_{h+1} - \tau$. This proves the claim and the lemma follows. 
\end{proof}

\begin{figure}
\begin{center}
\includegraphics[page=2]{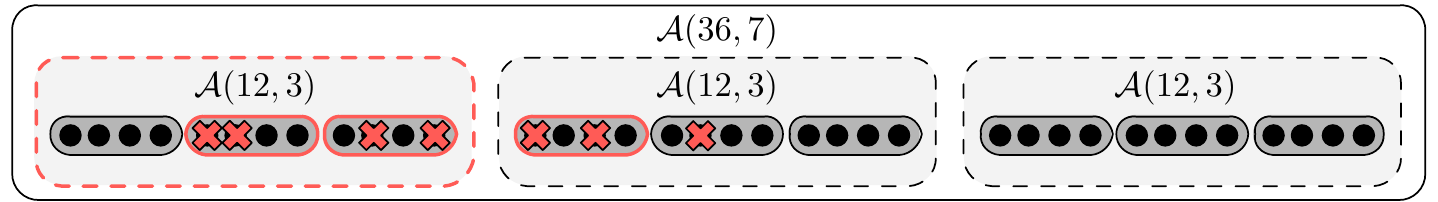}
\end{center}
\caption{Counters in non-faulty blocks will eventually coincide. The picture illustrates the output of $b[\cdot]$ for three blocks $i \in \{h,h+1,h+2\}$ running $\tau(2m)^{i+1}$-counters with base $(2m)=6$. For every $\beta \in [m]$, we can find a interval (blue segments) where all non-faulty blocks point to the same value $\beta$ for sufficiently long, even though the clocks may cycle at different time points.
\label{fig:lemmaproof}}
\end{figure}

%###
\subsection{Voting Blocks}
%###

Next, we define the voting scheme for all the blocks. Define the $\majority$ operation for every message vector~$\vec x \in X^{kn}$ as follows:
\[
\majority \vec x = \begin{cases}
               a &\textrm{if } a \textrm{ is contained in } \vec x \textrm{ more than } kn/2 \textrm{ times}, \\
               * &\textrm{otherwise},
\end{cases}
\]
where the symbol $*$ indicates that the function may evaluate to an arbitrary value, including different values at different non-faulty nodes. We use the following short-hands as local variables:
\begin{align*}
 b^{i} &= \majority\{ b[i,j] \mid j \in [n] \}, \\
 B    &= \majority\{ b^{i} \mid i \in [k] \}, \\
 R    &= \majority\{ r[B,j] \mid j \in [n] \}.
\end{align*}
Note that these functions can be locally computed from the received state vectors by checking for a majority and defaulting to, e.g., $0$, when no such majority is found: by definition, the $\majority$ function may return an arbitrary value if $kn/2$ or fewer correct nodes ``vote'' for the same value; as non-faulty nodes broadcast the same state to all nodes, there can only be one such majority value.

In words, $b^{i}$ denotes the block which the nodes in block $i$ support as a leader; different correct nodes may ``observe'' different values of $b^i$ only if $i$ is a faulty block or $\vec A_i$ has not yet stabilised. As $F<(f+1)m=(f+1)\lceil k/2\rceil$, a majority of the blocks is non-faulty. Hence, if all non-faulty blocks support the same leader block $\beta \in [m]$, then $B$ evaluates to $\beta$ at all correct nodes. By \lemmaref{lemma:rounds}, this is bound to happen eventually. Finally, $R$ denotes the round counter of block $B$, which is ``read'' correctly by all non-faulty nodes if $B$ is non-faulty.

Analogously to before, let $i'\in [k]$ and denote by $B_q[i,j]$ and $R_q[i,j]$, respectively, the values to which the above functions evaluate in round $q$ at node $(i,j)$. Then we can conclude from \lemmaref{lemma:rounds} that eventually all non-faulty nodes agree on $R$ for $\tau$ rounds.
\begin{lemma}\label{lemma:round-counter}
There is a round $t \le T(\vec A) + c_k - \tau$ such that:
\begin{enumerate}[label={(\alph*)},noitemsep]
  \item $R_q[i,j] = R_q[i',j']$ for any $t \le q < t + \tau$ and non-faulty nodes $(i,j)$ and $(i',j')$.
  \item $R_{q+1}[i,j]=R_{q}[i,j] + 1 \bmod \tau$ for any $t \le q< t+\tau - 1$ and non-faulty node $(i,j)$.
\end{enumerate}
\end{lemma}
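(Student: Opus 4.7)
The plan is to identify a single non-faulty block $\beta\in[m]$ on which all correct nodes will eventually agree as the ``leader'' for $\tau$ rounds, and then show that the round counter $r[\beta,\cdot]$ is read consistently through the two nested majority votes. I would begin by noting that since $F<(f+1)m$ and every faulty block contains more than $f$ faulty nodes, strictly fewer than $m$ of the $k$ blocks are faulty; in particular, among the $m$ candidates in $[m]$ at least one block $\beta$ is non-faulty. Applying \lemmaref{lemma:rounds} to this $\beta$ at $t=T(\vec A)$ yields a round $u$ with $T(\vec A)\le u\le T(\vec A)+c_k-\tau$ such that $b_q[i,j]=\beta$ for every non-faulty $(i,j)$ in a non-faulty block and every $u\le q<u+\tau$. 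I would then set $t:=u$, which immediately meets the required bound on $t$.

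The second step is to unfold the majority computations at a fixed correct observer and any $q\in[u,u+\tau)$. For every non-faulty block $i$, at least $n-f>n/2$ nodes in $i$ are non-faulty (the inequality $f<n/2$ being implicit in $\vec A\in\mathcal{A}(n,f,c)$), and they all broadcast the common value $\beta$; hence the observer sees $\beta$ as $b[i,j]$ from a strict majority of $j\in[n]$, so $b^{i}_q=\beta$. Since more than $k/2$ of the blocks are non-faulty (again by $F<(f+1)m$), these non-faulty blocks form a strict majority among the $b^i_q$-values, and $B_q=\beta$ at the observer. Because $\beta$ itself is non-faulty and $q\ge T(\vec A)$, all non-faulty nodes in block $\beta$ share a common value $r_q := r_q[\beta,\cdot]$, forming a strict majority within block $\beta$; therefore every correct observer computes $R_q[i,j]=r_q$. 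This is precisely claim~(a).

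Claim~(b) is then immediate: for $t\le q<t+\tau-1$, part~(a) applies at both $q$ and $q+1$, and both values equal the corresponding $r[\beta,\cdot]$ in the non-faulty, stabilised block $\beta$. But by construction $\vec A_\beta$ is a $c_\beta$-counter with $c_\beta=\tau(2m)^{\beta+1}$, and its output interpreted as $(r,y)\in[\tau]\times[(2m)^{\beta+1}]$ increments $r$ by one modulo $\tau$ each round. Hence $R_{q+1}[i,j]=R_q[i,j]+1\bmod\tau$.

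The one step I expect to require the most care is verifying the two majority inequalities simultaneously from the single hypothesis $F<(f+1)m$: block-internally one needs $n-f>n/2$ so that the common value of non-faulty nodes wins, and across blocks one needs strictly fewer than $k/2$ faulty blocks so that the vote over $b^i$-values is decisive. Both follow from the resilience bound and from the fact that $\vec A$ itself being a counter forces $f<n/2$; once these are in hand, the rest is a mechanical assembly of \lemmaref{lemma:rounds} with the stabilisation guarantee of $\vec A_\beta$.
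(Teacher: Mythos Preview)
Your proposal is correct and follows essentially the same route as the paper's own proof: pick a non-faulty $\beta\in[m]$ (existence guaranteed by $F<(f+1)m$), invoke \lemmaref{lemma:rounds} to obtain the window $[t,t+\tau)$, then unwind the two majority votes using $n-f>n/2$ within non-faulty blocks and the bound of fewer than $k/2$ faulty blocks across blocks, and finally read off~(b) from the stabilised counter of block~$\beta$. The only cosmetic difference is that the paper invokes the sharper bound $f<n/3$ (from the consensus lower bound) where you use $f<n/2$; either suffices for the majority argument.
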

\begin{proof}
As $F<(f+1)m$, there is a non-faulty block $\beta \in [m]$. By applying \lemmaref{lemma:rounds} to round $T(\vec A)$, there is a round $t \le T(\vec A) + c_k - \tau$ such that $b_q[i,j] = \beta$ for each $t \le q < t + \tau$ and non-faulty node $(i,j)$ in a non-faulty block $i$. Therefore, $b_q^{i} = \beta$ for all non-faulty blocks $i$. As $F<(f+1)m$, the number of faulty blocks is at most $m-1=\lceil k/2\rceil-1<k/2$, and thus, the majority vote yields $B=\beta$. 

Since $T(\vec A) \le t \le q < t+\tau$, we have that block $\beta$ has stabilised by round $t$ and therefore, $r_q[\beta,j]=r_q[\beta,j']$ for all non-faulty nodes $(\beta,j)$, $(\beta,j')$ in the non-faulty block $\beta$. Moreover, as block $\beta$ is non-faulty, it contains at most $f$ faulty nodes. In particular, it must be that $f < n/3 < n/2$ as otherwise counting cannot be solved and 
$\vec A \in \mathcal{A}(n,f,c)$. Hence, a majority vote yields $R=r_q[\beta,j]$, where $(\beta,j)$ is any non-faulty node in block $\beta$ proving claim~(a).

To show (b), observe that since block $\beta$ is non-faulty and has stabilised by round $t$ we have that $R_q[i,j]=r_q[\beta,j']$. Moreover, non-faulty nodes in block $\beta$ increment $r[\beta,j]$ by one modulo $\tau$ in the considered interval.
\end{proof}

%###
\subsection{Executing the Phase King}
%###

We have now built a voting scheme that allows the nodes to eventually agree on common counter for $\tau$ rounds. Roughly speaking, what remains is to use this common counter to control a non-self-stabilizing $F$-resilient $C$-counting algorithm. 

We require that this algorithm guarantees two properties. First, all non-faulty nodes reach agreement and start counting correctly within $\tau$ rounds provided that the underlying round counter is consistent. Second, if all non-faulty nodes agree on the output, then the agreement persists \emph{regardless} of the round counter's value. It turns out that a straightforward adaptation of the classic phase king protocol~\cite{berman89consensus} does the job.

From now on, we refer to nodes by their indices $v \in [N]$. The phase king protocol (like any consensus protocol) requires that $F<N/3$. It is easy to verify that this follows from the preconditions of \theoremref{thm:layering}.

Denote by $a[v] \in [C] \cup \{ \infty \}$ the output register of the algorithm, where $\infty$ is used as a ``reset state''. There is also an auxiliary register $d[v] \in \{0,1\}$. Define the following short-hand for the increment operation modulo $C$:
\[
\inc a[v] = \begin{cases}
a[v] \gets a[v] + 1 \bmod C & \text{if }a[v]\neq \infty,\\
\mbox{no action} & \text{if }a[v]=\infty.
\end{cases}
\]

\begin{table}
\centering
\begin{tabular}{@{}l@{\qquad}r@{\ \ }l@{}}
\toprule
Set & \multicolumn{2}{@{}l@{}}{Instructions} \\
\midrule
$I_{3\ell}$:
& 1. & If fewer than $N-F$ nodes sent $a[v]$, set $a[v] \gets \infty$. \\
& 2. & $\inc a[v]$. \\
\midrule
$I_{3\ell+1}$:
& 1. & Let $z_j = |\{u \in [N] : a[u] = j\}|$ be the number of $j$ values received. \\
& 2. & If $z_{a[v]} \ge N-F$, set $d[v] \gets 1$. Otherwise, set $d[v] \gets 0$. \\
& 3. & Set $a[v] \gets \min \{ j : z_j > F\}$. \\
& 4. & $\inc a[v]$. \\
\midrule
$I_{3\ell+2}$:
& 1. & If $a[v] = \infty$ or $d[v] = 0$, then set $a[v] \gets \min\{C, a[\ell]\}$. \\
& 2. & Set $d[v] \gets 1$ and $\inc a[v]$. \\
\bottomrule
\end{tabular}
\caption{The instruction sets for node $v\in [N]$ in the phase king.}\label{tab:instr}
\end{table}

For $\ell \in [F+2]$, we define the instruction sets listed in Table~\ref{tab:instr}. First, we show that if these instructions are executed in the right order by all non-faulty nodes for a non-faulty leader $\ell \in [F+2]$, then agreement on a counter value is established.
\begin{lemma}\label{lemma:establish_agreement}
Suppose that for some non-faulty node $\ell \in [F+2]$ and a round $q$, all non-faulty nodes execute instruction sets $I_{3\ell}$, $I_{3\ell+1}$, and $I_{3\ell+2}$ in rounds $q-2$, $q-1$, and $q$, respectively. Then $a_{q+1}[v]=a_{q+1}[u]\neq \infty$ for any two non-faulty nodes $u,v\in [N]$. Moreover, $d_{q+1}[v]=1$ at each non-faulty node.
\end{lemma}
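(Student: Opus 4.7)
My plan is to analyze the three rounds $q-2$, $q-1$, $q$ in sequence, maintaining the invariant that at every stage the finite (non-$\infty$) $a$-values held by non-faulty nodes coincide. First I would observe that executing $I_{3\ell}$ in round $q-2$ enforces exactly this invariant: if a non-faulty $v$ ends up with $a_{q-1}[v]\neq\infty$, then step~1 did not reset $v$, so at least $N-F$ of the messages received by $v$ equal $a_{q-2}[v]$, and hence at least $N-2F$ non-faulty nodes broadcast that value. Because $N>3F$ (stated explicitly just before the lemma), any two such sets of size $\geq N-2F$ among the $N-F$ non-faulty nodes must intersect, so $a_{q-2}$---and, after the increment in step~2, $a_{q-1}$---is identical at all non-faulty nodes not reset. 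Let $\alpha$ denote this common value, if any exists.

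Second, I would show that after $I_{3\ell+1}$ in round $q-1$ every non-faulty $u$ has $a_q[u]\in\{\alpha+1,\infty\}$: since non-faulty nodes broadcast only values in $\{\alpha,\infty\}$, for any $j\in[C]\setminus\{\alpha\}$ the count $z_j$ at any non-faulty $u$ is at most $F$ (only Byzantine can vote for $j$), so $\alpha$ is the unique possible finite candidate in $\min\{j:z_j>F\}$. I would then prove the crucial propagation step: if some non-faulty $v$ has $d_q[v]=1$ and $a_q[v]=\alpha+1\neq\infty$, then every non-faulty node---and in particular the king $\ell$---also has $a_q=\alpha+1$. Indeed, $a_q[v]=\alpha+1$ requires $z_\alpha>F$ at $v$, while $d_q[v]=1$ gives $z_{a_{q-1}[v]}\geq N-F$ at $v$; since $\sum_j z_j=N$, these two bounds coexist only if $a_{q-1}[v]=\alpha$ and $z_\alpha\geq N-F$ at $v$, which forces at least $N-2F$ non-faulty nodes to have held $\alpha$ going into round $q-1$. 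Every non-faulty $u$ therefore sees $z_\alpha\geq N-2F>F$, and combined with the observation about non-$\alpha$ candidates this pins down $a_q[u]=\alpha+1$.

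With this in hand, I finish by a case distinction on the final round $q$. In Case~A, some non-faulty node satisfies $d_q=1$ and $a_q\neq\infty$; the propagation step then yields $a_q=\alpha+1$ at every non-faulty node, and in particular at the king, who broadcasts $\alpha+1$. In $I_{3\ell+2}$, nodes with $d_q=1$ preserve their value through step~1, nodes with $d_q=0$ adopt $\min\{C,\alpha+1\}=\alpha+1$, and step~2 then increments all to $\alpha+2$ while setting $d\gets 1$. In Case~B, every non-faulty node has $a_q=\infty$ or $d_q=0$, so step~1 uniformly overwrites $a$ with $\min\{C,K\}$ where $K=a_q[\ell]$ is the common value received from the honest king; step~2 applies $\inc$ and sets $d\gets 1$, producing a common finite value at all non-faulty nodes (note that $\min\{C,\infty\}=C$ is mapped to a finite value by the modular $\inc$). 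The delicate point I expect to dwell on is the propagation step, in particular ruling out the scenario where $d_q[v]=1$ arises from $a_{q-1}[v]=\infty$ yet $a_q[v]$ becomes finite: it is precisely the incompatibility of $z_\infty\geq N-F$ with $z_\alpha>F$ (via $\sum_j z_j=N$) that ensures the supported value $\alpha$ is also seen by the king.
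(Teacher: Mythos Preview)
Your proposal is correct and follows essentially the same architecture as the paper's proof: after $I_{3\ell}$ the finite $a$-values of non-faulty nodes coincide (your $\alpha$, the paper's $x$), after $I_{3\ell+1}$ the non-faulty values lie in $\{\alpha+1,\infty\}$, and the final case split is exactly the paper's distinction between ``all non-faulty nodes execute the first instruction of $I_{3\ell+2}$'' (your Case~B) versus ``some non-faulty node does not'' (your Case~A). Your ``propagation step'' is the paper's Case~2 argument that $d_q[v]=1$ forces $z_{a_{q-1}[v]}\geq N-F$, hence at least $N-2F>F$ non-faulty nodes share $a_{q-1}[v]$, so every non-faulty node's $\min\{j:z_j>F\}$ evaluates to that common finite value.

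The one place where you are noticeably more careful than the paper is the ``delicate point'': you explicitly rule out $a_{q-1}[v]=\infty$ via the incompatibility of $z_\infty\geq N-F$ and $z_\alpha>F$ under $\sum_j z_j=N$. The paper asserts that ``the third instruction of $I_{3\ell+1}$ must evaluate to $x'\in[C]$'' without spelling this out; your treatment fills that gap cleanly. Likewise, your remark that $\min\{C,\infty\}=C$ is then mapped to a finite value by $\inc$ makes explicit a detail the paper leaves implicit. These are refinements rather than a different route.
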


\begin{proof}
This is essentially the correctness proof for the phase king algorithm. Without loss of generality, we can assume that the number of faulty nodes is exactly $F$. By assumption, we have $F<N/3$ and hence $2(N-2F)>N-F$. It follows that it is not possible that two non-faulty nodes $v,u\in [N]$ satisfy both $a_{q-1}[v],a_{q-1}[u]\in [C]$ and $a_{q-1}[v]\neq a_{q-1}[u]$: this would imply that there are at least $N-2F$ non-faulty nodes $w$ that had $a_{q-2}[w] + 1 \bmod C =a_{q-1}[v]$ and the same number of non-faulty nodes $w'$ with $a_{q-2}[w'] + 1 \bmod C =a_{q-1}[u]$; however, there are only $N-F<2(N-2F)$ non-faulty nodes. Therefore, there is some $x\in [C]$ so that $a_{q-1}[v]\in \{x,\infty\}$ for all non-faulty nodes $v$. Checking $I_{3\ell+2}$ and exploiting that $2(N-F)>N-F$ once more, we see that this implies that also $a_q[v]\in \{x + 1 \bmod C,\infty\}$ for some  any non-faulty node~$v$.

We need to consider two cases. In the first case, all non-faulty nodes execute the first instruction of $I_{3\ell+2}$ in round $q$. Then $a_{q+1}[v]=\min\{C, a_q[\ell]\}+1\bmod C$ for any non-faulty node~$v$. In the second case, there is some node $v$ not executing the first instruction of $I_{3\ell+2}$. Hence, $d_q[v]=1$, implying that $v$ computed $z_{a_{q-1}[v]}\geq N-F$ in round $q-1$. Consequently, at least $N-2F>F$ non-faulty nodes $u$ satisfy $a_{q-1}[u]=a_{q-1}[v]$. We infer that $a_q[v] = x'$ for all non-faulty nodes $v$: the third instruction of $I_{3\ell+1}$ must evaluate to $x' \in [C]$ at all non-faulty nodes. Clearly, this implies that $a_{q+1}[v]=a_{q+1}[u]\neq \infty$ for non-faulty nodes $v,u$, regardless of whether they execute the first instruction of $I_{3\ell+2}$ or not. Trivially, $d_{q+1}[v]=1$ at each non-faulty node $v$ due to the second instruction of $I_{3\ell+2}$.
\end{proof}

Next, we argue that once agreement is established, it persists---it does not matter any more which instruction sets are executed.
\begin{lemma}\label{lemma:maintain_agreement}
Assume that $a_q[v] = x \in [C]$ and $d_q[v] = 1$ for all non-faulty nodes $v$ in some round~$q$. Then $a_{q+1}[v] = x+1 \bmod c$ and $d_{q+1}[v]=1$ for all non-faulty nodes $v$.
\end{lemma}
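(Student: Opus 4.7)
The plan is to do a case analysis on which of the three instruction set types $I_{3\ell}$, $I_{3\ell+1}$, $I_{3\ell+2}$ happens to be scheduled in round $q+1$; since the hypothesis already pins down the entire non-faulty state at round $q$, in each case we can directly compute what the step produces. The single common fact I will use throughout is that of the $N$ broadcasts received in round $q+1$, at least $N-F$ carry the value $x$ (sent by the non-faulty nodes), and for every value $j \ne x$ at most $F$ of the received messages carry $j$ (only the Byzantine nodes can contribute such a message). Together with $N-F > 2F$, which is $F < N/3$ and was already established as a precondition for running the phase king, these bounds settle all three cases.

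For $I_{3\ell}$, every non-faulty node $v$ sees its own value $x=a_q[v]$ in at least $N-F$ received messages, so the guard in step 1 fails, $a[v]$ stays at $x$, and step 2 makes $a_{q+1}[v]=x+1\bmod C$; since $I_{3\ell}$ does not touch $d$, we keep $d_{q+1}[v]=1$. For $I_{3\ell+2}$, the hypothesis $a_q[v]=x\ne\infty$ together with $d_q[v]=1$ makes the guard in step 1 false, so step 1 is skipped; step 2 explicitly sets $d[v]\gets 1$ and then increments $a[v]$ from $x$ to $x+1\bmod C$. In both cases the conclusion is immediate.

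The substantive work is the $I_{3\ell+1}$ case, which I expect to be the main obstacle because the node actively overwrites $a[v]$ in step 3. Here I will compute the histogram $z_j$ at each non-faulty $v$: by the common fact above, $z_x \ge N-F$ while $z_j \le F$ for every $j \ne x$. From $z_{a[v]}=z_x\ge N-F$, step 2 sets $d[v]\gets 1$. From $N-F>F$ and $z_j\le F$ for $j\ne x$, the only index $j$ with $z_j>F$ is $j=x$, so step 3 assigns $a[v]\gets x$, and the final increment in step 4 produces $a_{q+1}[v]=x+1\bmod C$. Collecting the three cases yields the lemma.
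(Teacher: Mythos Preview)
Your proof is correct and follows essentially the same approach as the paper: both establish the counting fact that at least $N-F$ received values equal $x$ and at most $F$ equal any $y\neq x$, then perform the same three-way case split over instruction sets, with the $I_{3\ell+1}$ case handled by showing $z_x\ge N-F>F$ and $z_j\le F$ for $j\neq x$ so that the $\min$ in step~3 returns $x$. The only cosmetic difference is your indexing of the round in which the instruction is executed (you say ``round $q{+}1$'' where the paper would say ``round $q$''), but this is a harmless relabeling and does not affect the argument.
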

\begin{proof}
Each node will observe at least $N-F$ nodes with counter value $x$, and hence at most $F$ nodes with some value $y\neq x$. For non-faulty node $v$, consider all possible instruction sets it may execute. 

First, consider the case where instruction set $I_{3\ell}$ is executed. In this case, $v$ increments $x$, resulting in $a_{q+1}[v]=x+1\bmod C$ and $d_{q+1}[v] = 1$. Second, executing $I_{3\ell+1}$, node $v$ evaluates $z_x \ge N-F$ and $z_y \le F$ for all $y \neq x$. Hence it sets $d_{q+1}[v]=1$ and $a_{q+1}[v]=x + 1 \bmod C$. Finally, when executing $I_{3\ell+2}$, node $v$ skips the first instruction and sets $d_{q+1}[v]=1$ and $a_{q+1}[v]=x+1 \bmod C$.\qedhere
\end{proof}

%###
\subsection{Proof of Theorem~\ref{thm:layering}}
%###

We can now prove the main result. As shown in \lemmaref{lemma:round-counter} we have constructed a $\tau$-counter that will remain consistent at least $\tau$ rounds. This is a sufficiently long time for the nodes to execute the phase king protocol in synchrony. This protocol will stabilise the $C$-counter for the network of $N$ nodes. More precisely, each node $(i,j)$ runs the following algorithm:
\begin{enumerate}[noitemsep]
 \item Update the state of algorithm $\vec A_i$.
 \item Compute the counter value $R$.
 \item Update state according to instruction set $I_{R}$ of the phase king protocol.
\end{enumerate}

By \lemmaref{lemma:round-counter}, there is a round $t\leq T(\vec A) + c_k - \tau$ so that the variables $R_q[i,j]$ meet the requirements of $\tau$-counting for rounds $t \le q < t + \tau$. For each round $q$, all non-faulty nodes execute the same set of instructions. In particular, as $\tau = 3(F+2)$, no matter from which value the $\tau$-counting starts, for at least $F+1$ values $\ell \in [F+2]$  the instruction sets $I_{3\ell}$, $I_{3\ell+1}$, and $I_{3\ell+2}$, in this order, will be jointly executed by all non-faulty nodes at some point during rounds $t, \ldots, t + \tau - 1$. 

As there are only $F$ faulty nodes, there are at least two non-faulty nodes $\ell \in [F+2]$. Thus, the prerequisites of \lemmaref{lemma:establish_agreement} are satisfied in some round $q \le t + \tau \leq T(\vec A)+c_k$. By an inductive application of \lemmaref{lemma:maintain_agreement}, we conclude that the variables $a_q[v]$ are valid outputs for $C$-counting, and therefore, we have indeed constructed an algorithm $\vec B\in {\cal A}(N,F,C)$.

The bound on $q$ yields that $T(\vec B) \le T(\vec A)+c_k=T(\vec A) + 3(F+2)(2m)^k$. Concerning the state complexity, observe that each non-faulty node $(i,j)$ needs the memory for executing (1)~the algorithm $\vec A_i$, which needs $S(\vec A_i)=S(\vec A)$ bits of memory, and (2)~the phase king protocol, which needs $\lceil \log(C+1)\rceil$ bits to store $a_q[v] \in [C] \cup \{\infty\}$ and one additional bit to store $d_q[v]$.\qed

\section{The Recursive Construction}\label{sec:recursion}

In this section, we show how to use \theoremref{thm:layering} recursively to construct synchronous $c$-counters with a near-optimal resilience, linear stabilisation time, and a small number of states (see \figureref{fig:layers} for an illustration). First, we show how to satisfy the preconditions of \theoremref{thm:layering} in order to start the recursion. Then we demonstrate the principle by choosing a fixed value of $k$ throughout the construction; this achieves a resilience of $\Omega(n^{1-\varepsilon})$ for any constant $\varepsilon > 0$. However, as the number of nodes in the initial applications of \theoremref{thm:layering} is small, better results are possible by starting out with large values of $k$ and decreasing them later. This yields an algorithm with a resilience of $n^{1-o(1)}$ and $O(f)$ stabilisation time using $O(\log^2 f / \log \log f + \log c)$ state bits.

%###
\subsection{The Base Case}
%###
To apply \theoremref{thm:layering}, we need counters of resilience $f>0$.  For example, one can use the space-efficient $1$-resilient counters from \cite{dolev15counting} as base of the construction.  Alternatively, we can use as a starting point trivial counters for $n = 1$ and $f = 0$. Then we can apply the same construction as in \theoremref{thm:layering} with the parameters $n=1$, $f=0$, $k=N$, and $F<N/3$. The same proof goes through in this case and yields the following corollary. Note that here the resilience is optimal but the algorithm is inefficient with respect to the stabilisation time and space complexity.
\begin{corollary}\label{cor:base}
 For any $c > 1$, there exists a synchronous $c$-counter with optimal resilience $f<n/3$ that stabilises in $f^{O(f)}$ rounds and uses $O(f \log f + \log c)$ bits of state.
\end{corollary}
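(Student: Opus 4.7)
The plan is to invoke Theorem~\ref{thm:layering} with the most degenerate possible base case: the trivial single-node counter. Concretely, pick any target $N$ and target resilience $F<N/3$, and set $n=1$, $f=0$, $k=N$, $m=\lceil N/2\rceil$. Take $\vec A\in \mathcal{A}(1,0,c')$ to be the algorithm in which the lone (always non-faulty) node stores a value in $\{0,\dots,c'-1\}$ and increments it modulo $c'$ each round, with $c'=3(F+2)(2m)^N$. This trivially has $T(\vec A)\le 1$ and $S(\vec A)=\lceil \log c'\rceil$. The hypotheses of the theorem, namely $k\ge 3$, $C=c>1$, and $F<(f+1)m=\lceil N/2\rceil$, all follow from the assumption $N\ge 3$, $c>1$, and the choice $F<N/3<\lceil N/2\rceil$.

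Next I would verify that the proof of Theorem~\ref{thm:layering} goes through verbatim in this boundary case. Since $k=N$ and $n=1$, each of the $k$ blocks consists of exactly one node, so ``block $i$ is non-faulty'' means ``the unique node of block $i$ is non-faulty'', and the $c_i$-counter $\vec A_i$ is simply the single node's counter reduced modulo $c_i$. Lemmas~\ref{lemma:consecutive}--\ref{lemma:round-counter} use only (i) the fact that $\vec A$ stabilises within $T(\vec A)$ rounds in a non-faulty block and (ii) within-block majority arguments; both are trivially valid when the block contains a single non-faulty node, which \emph{is} the majority. The phase-king stage requires $F<N/3$, which is precisely how $F$ was chosen. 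The one line to double-check is the appeal to $f<n/3$ in the proof of \lemmaref{lemma:round-counter}: with $f=0$, this condition holds trivially.

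Finally, I would plug in the theorem's bounds: $T(\vec B)\le 1+3(F+2)(2m)^N$ and $S(\vec B)=\lceil\log c'\rceil+\lceil\log(C+1)\rceil+1$. In the regime where $N=\Theta(F)$ (which is the relevant one when $F$ is near the optimal $\lfloor (N-1)/3\rfloor$), the time is $F\cdot N^{O(N)}=F^{O(F)}$ and the space is $O(N\log N)+O(\log C)=O(F\log F+\log C)$; renaming $(N,F,C)$ as $(n,f,c)$ recovers the stated bounds. The main potential obstacle is this verbatim-verification step: one must audit the proof of Theorem~\ref{thm:layering} to confirm that no step secretly assumed $n\ge 2$ or $f\ge 1$ (e.g., that the within-block majority is nontrivial, or that $\vec A_i$ itself requires $c_i>1$ in some internal argument). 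A careful read-through indicates that no such hidden assumption is used, so the construction and its analysis carry through unchanged.
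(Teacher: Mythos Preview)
Your proposal is correct and follows essentially the same approach as the paper: apply the construction of \theoremref{thm:layering} with the trivial single-node $0$-resilient counter as the base case, so that each of the $k=N$ blocks is a single node, and then read off the time and space bounds. The paper fixes $N=3f+1$ directly, whereas you allow arbitrary $N$ with $F<N/3$ and then specialize to $N=\Theta(F)$; your explicit audit that the boundary case $n=1$, $f=0$ does not break any step of the proof (in particular the within-block majority and the use of $f<n/3$ in \lemmaref{lemma:round-counter}) is a welcome elaboration of the paper's one-line ``the same proof goes through.''
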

\begin{proof}
 For any $f > 0$, we can construct a $f$-resilient counter for $3f+1$ nodes. We use the trivial 0-resilient counter for one node as the base case for Theorem~\ref{thm:layering} and set $k=3f+1$, that is, each block consists of a single node. Theorem~\ref{thm:layering} yields an $f$-resilient algorithm that stabilises in $(3f+2)(2m)^k = f^{O(f)}$ rounds, where the $O(f)$ term is $(3+o(1))f$.
\end{proof}

%###
\subsection{Using a Fixed Number of Blocks}
%###

For the sake of simplicity, we will first discuss the recursive construction for a fixed value of $k$ here. Improved resilience can be achieved by varying $k$ depending on the level of recursion which we show afterwards.

\begin{figure}
\includegraphics[page=1,width=\columnwidth]{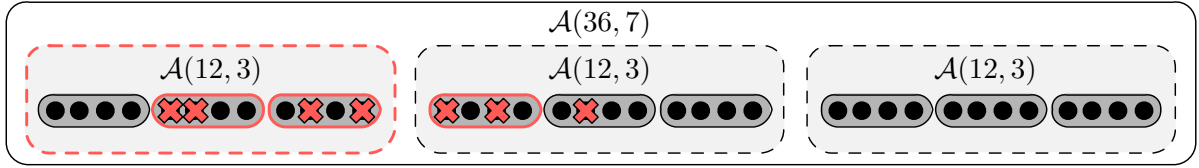}
\caption{Recursive application of our construction using $k=3$ blocks. The filled circles represent the nodes. The groups of four nodes run $1$-resilient counters. On top of this, the dashed groups run $3$-resilient counters. At the top-most layer, the nodes run a $7$-resilient counter. Faulty blocks are drawn with a red border, whereas faulty nodes are marked with a red cross.
\label{fig:layers}}
\end{figure}

\begin{theorem}\label{thm:fixed-blocks}
 Let $1 > \varepsilon > 0$ and $2\leq c \in \N$. There exists a synchronous $c$-counting algorithm with a resilience of $f = \Omega(n^{1-\varepsilon})$ that stabilises in $O(2^{2^{1/\varepsilon}} f)$ rounds and uses $O(2^{1/\varepsilon}\log f + \log^2 f + \log c)$ bits of state per node.
\end{theorem}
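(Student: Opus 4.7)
The plan is to apply \theoremref{thm:layering} recursively with a \emph{constant} number of blocks $k = k(\varepsilon)$ throughout. I would start from a small algorithm of positive resilience (for instance one produced by a single application of \corollaryref{cor:base} with a small constant $f_0$, or even the trivial $0$-resilient counter on $n_0 = 1$ node) and then iterate the boosting construction $L$ times. Writing $n_\ell, f_\ell$ for the parameters after $\ell$ iterations and $m = \lceil k/2 \rceil$, the recurrence reads $n_{\ell+1} = k n_\ell$ and, picking $f_{\ell+1}$ maximally, $f_{\ell+1} + 1 = m(f_\ell + 1)$, so $n_L = k^L n_0$ and $f_L + 1 = m^L (f_0+1)$.

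For $f_L = \Omega(n_L^{1-\varepsilon})$ to hold, I would require $m^L / n_L^{1-\varepsilon} = (k/2)^L / k^{L(1-\varepsilon)}$ to stay bounded below, i.e., $k^\varepsilon \geq 2$. Fixing $k = 2^{\lceil 1/\varepsilon \rceil + 1}$ yields $k^\varepsilon \geq 2 \cdot 2^\varepsilon > 2$, so the near-linear resilience is guaranteed at every level. Given a target $n$, I would then choose $L = \Theta(\log_k n) = \Theta(\varepsilon \log n)$ so that $n_L \approx n$.

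For the stabilisation time, \theoremref{thm:layering} gives $T_{\ell+1} \leq T_\ell + 3(f_{\ell+1}+2)(2m)^k$; since $f_\ell$ grows geometrically by a factor of at least $m \geq 2$ per level, this sum telescopes to $T_L = O((2m)^k f_L) = 2^{O(2^{1/\varepsilon})}\, f$, matching the claim. For the state complexity, each application adds $\lceil \log(c_\ell + 1) \rceil + 1$ bits, where the counter $c_\ell$ at level $\ell$ must be a multiple of $3(f_{\ell+1}+2)(2m)^k$. Picking the minimum such multiple for internal levels yields $\log c_\ell = O(\log f + k \log k) = O(\log f + 2^{1/\varepsilon}/\varepsilon)$, and summing over $L = O(\varepsilon \log f)$ levels gives
\begin{equation*}
S_L = O\bigl(L(\log f + k \log k) + \log c\bigr) = O\bigl(\log^2 f + 2^{1/\varepsilon}\log f + \log c\bigr),
\end{equation*}
where the $\log c$ term accounts for enlarging the top-level counter to be a multiple of the divisibility constraint that is also at least the desired $c$ (and reducing the output modulo $c$ at the end).

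The main point that needs care is the bookkeeping of the divisibility constraints on the $c_\ell$'s across levels: the counter modulus of the algorithm produced at level $\ell$ must simultaneously satisfy the ``multiple of $3(F+2)(2m)^k$'' requirement when it is plugged in as the parameter $c$ of \theoremref{thm:layering} at level $\ell+1$. This is manageable because we have complete freedom to pick each $c_\ell$ as any such multiple, and the minimum valid choice keeps $\log c_\ell$ small. A routine secondary check is that $F < N/3$ holds at each level, which follows from $F < (f+1)m$ and $N = kn$ once $k$ is fixed large enough relative to the initial $f_0$.
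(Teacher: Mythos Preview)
Your proposal is correct and follows essentially the same approach as the paper: fix a constant $k = \Theta(2^{1/\varepsilon})$, iterate \theoremref{thm:layering} $L = \Theta(\varepsilon \log n)$ times starting from a small base algorithm (the paper uses $n_0=4$, $f_0=1$ via \corollaryref{cor:base}), and bound the stabilisation time by a geometric series and the state by summing $O(\log f + k\log k)$ over the $L$ levels. The only cosmetic differences are that the paper takes the slightly simpler, non-maximal choice $f_i = h^i$ (with $h = k/2$) rather than your maximal $f_{\ell+1}+1 = m(f_\ell+1)$, and that the check $F<N/3$ you mention is already implied by the hypotheses of \theoremref{thm:layering} and need not be verified separately.
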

\begin{proof}
Fix $0 < \varepsilon < 1$ and let $k = 2h \ge 4$ be minimal such that $\varepsilon \ge 1/\log h$. Assume $f = 2^j$ for some $j \ge h$ and let $L = \log f / \log h$; w.l.o.g., assume that $L\in \N$. We will analyse how many nodes are required to get to a resilience of $\Omega(n^{1-\varepsilon})$ by applying \theoremref{thm:layering} for $L \le \varepsilon \log f$ iterations.

For all $i \ge 0$, let $f_{i} = h^i$ and $n_i = 4k^i$. At iteration $i+1$ we use \theoremref{thm:layering} to construct algorithms in $\mathcal{A}(n_{i+1}, f_{i+1}, c)$ (for any $c$) using $n_i$ and $f_i$ as the input parameters. Since $f_{i+1} = f_i h < (f_i+1) \lceil k/2 \rceil$ and $n_{i+1} = kn_i$, the conditions of \theoremref{thm:layering} are satisfied. To start the recursion, we will use an algorithm with parameters $f_0 = 1$ and $n_0 = 4$. By Corollary~\ref{cor:base}, such an algorithm with a stabilisation time of $T_0 = O(1)$ and a space complexity of $S_0 = O(\log c)$ exists.

Every iteration increases the resilience by a factor of at least $h=k/2$. After $L$ iterations, we tolerate at least $f = f_L = h^L$ failures using $n = n_L = 4k^L$ nodes. This gives 
\[
\frac{n}{f} = 4 \left(\frac{k}{k/2}\right)^L = 4 \cdot 2^L \le 8 f^{\varepsilon} <  8 n^{\varepsilon}.
\] 
and it follows that the resilience is $\Omega(n^{1 - \varepsilon})$.

It remains to analyse the stabilisation time and space complexity of the resulting algorithm; both follow from Theorem~\ref{thm:layering}. The stabilisation time of layer $i+1$ is 
\[
 T_{i+1} \le T_i + 3(f_{i+1} + 2)h^k,
\]
where $h = \lceil k/2 \rceil$. From the definition of $f_i$, we get the bound $\sum_{i=0}^{L} f_i = O(f_L) = O(f)$. Therefore, the overall stabilisation time is 
\[
   T = T_L \le O\!\left( h^k \sum_{i=0}^{L} f_i \right)  = O( h^k f).
\]
From \theoremref{thm:layering}, we get that the space complexity of layer $i+1$ is 
\[
 S_{i+1} \le S_i + \lceil \log(3(f_{i+2}+2)h^k + 1) \rceil + 1 = S_i + O(k \log h + \log f).
\]
As $L\leq \varepsilon \log f$, the total number of memory bits is then bounded by 
\[
 \sum_{i=0}^{L} O(k \log h + \log f) + O(\log c)\leq O(\varepsilon k \log h \log f + \log^2 f + \log c).
\]
Recall that $h=k/2$ is minimal such that $\varepsilon\geq 1/\log h$. Thus $k = O(2^{1/\varepsilon})$ and $\log h = O(1/\varepsilon)$, yielding the claimed bounds on time and space complexity. 
\end{proof}
Choosing a constant $\varepsilon$, we arrive at the following corollary.
\begin{corollary}
 For any constant $1 > \varepsilon > 0$ and any $2\leq c \in \N$, there exists a synchronous $c$-counter with resilience $f = \Omega(n^{1-\varepsilon})$ that stabilises in $O(f)$ rounds and uses $O(\log^2 n + \log c)$ bits of state.
\end{corollary}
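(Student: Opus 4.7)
The plan is to derive this corollary as an immediate consequence of \theoremref{thm:fixed-blocks} by fixing $\varepsilon$ to be a constant and simplifying the asymptotic bounds. Since every parameter in the theorem was tracked explicitly in terms of $\varepsilon$, $f$, $n$, and $c$, no new algorithmic ideas are needed; the work lies only in substitution.

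First I would invoke \theoremref{thm:fixed-blocks} directly to obtain, for the given constant $\varepsilon$, a synchronous $c$-counter with resilience $f = \Omega(n^{1-\varepsilon})$, stabilisation time $O(2^{2^{1/\varepsilon}} f)$, and space complexity $O(2^{1/\varepsilon} \log f + \log^2 f + \log c)$ bits per node. Next, observe that because $\varepsilon$ is constant, both $2^{1/\varepsilon}$ and $2^{2^{1/\varepsilon}}$ are constants (depending on $\varepsilon$ only), so the stabilisation time reduces to $O(f)$ and the term $2^{1/\varepsilon}\log f$ reduces to $O(\log f)$. Since $f \le n$, we have $\log^2 f \le \log^2 n$, and $O(\log f) = O(\log^2 n)$ is absorbed into the $\log^2 n$ term, giving a total space bound of $O(\log^2 n + \log c)$ bits.

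The one minor subtlety worth mentioning is that \theoremref{thm:fixed-blocks} is stated for $f$ of a particular form (e.g.\ $f = 2^j$ with $j \ge h$, and $L = \log f / \log h$ an integer), but this affects only the explicit construction; to get the corollary for all $n$, I would round $f$ down to the nearest value of the required form, losing only a constant factor in resilience. This preserves all asymptotic bounds.

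I do not anticipate any real obstacle here, since the corollary is essentially a restatement of \theoremref{thm:fixed-blocks} under the specialisation that $\varepsilon$ is a constant; the key conceptual and technical content has already been established.
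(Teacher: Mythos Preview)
Your proposal is correct and matches the paper's approach: the paper simply states ``Choosing a constant $\varepsilon$, we arrive at the following corollary'' without further argument, and your derivation spells out exactly the substitution that makes the $\varepsilon$-dependent factors constant and replaces $\log f$ by $\log n$.
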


%###
\subsection{Varying the Number of Blocks}
%###

Obviously, the factor $2^{2^{1/\varepsilon}}$ makes the previous construction impractical unless $1/\varepsilon$ is small. However, it turns out that we can still achieve good resilience without a doubly-exponential blow-up in the stabilisation time by carefully varying the number of blocks at each level. 

\begin{theorem}\label{thm:adaptive}
 For any $c>1$, there exist synchronous $c$-counters with a resilience of $f=n^{1-o(1)}$ that stabilises in $O(f)$ rounds and uses $O(\log^2 f  / \log \log f + \log c)$ bits of space per node.
\end{theorem}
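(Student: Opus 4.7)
The plan is to apply Theorem~\ref{thm:layering} recursively through $L = \Theta(\log f / \log \log f)$ levels, starting from a constant-size base case (for instance the trivial $n_0 = 1$, $f_0 = 0$ counter, or an algorithm from Corollary~\ref{cor:base}) and choosing a \emph{decreasing} schedule of block counts $k_1 > k_2 > \cdots > k_L$ in which $k_i = \Theta(L - i + 3)$ falls from $\Theta(L)$ down to the minimum value $3$. At each intermediate step I would set the counter size $c_i$ to be a multiple of the divisibility requirement $3(f_{i+1}+2)(2m_{i+1})^{k_{i+1}}$ demanded by the next application, so that the preconditions of Theorem~\ref{thm:layering} are met throughout, and at the top I would take $c_L = c$.

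The heart of the argument is tuning the implicit constant $\alpha$ in $k_i \approx \alpha(L-i)$ so that three constraints hold simultaneously. First (\emph{resilience growth}), the recursion $f_i \leq (f_{i-1}+1)\lceil k_i/2 \rceil$ yields $f_L \approx \prod_i m_i$, and Stirling gives $\log f_L = \Theta(L \log L)$, so the choice $L = \Theta(\log f / \log \log f)$ reaches $f_L \geq f$. Second (\emph{time}), level $i+1$ contributes $\approx f_{i+1}\, k_{i+1}^{k_{i+1}}$ rounds, and a Stirling calculation shows that for a sufficiently small $\alpha < 1$ one has $(2m_i)^{k_i} / \prod_{j>i} m_j \leq 2^{-(L-i)}$; dividing by $f_L$ and summing over levels makes the series telescope to $O(f_L) = O(f)$. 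Third (\emph{space}), each level adds $O(\log c_{i+1}) = O(\log f_{i+2} + k_{i+2} \log k_{i+2})$ bits, and by Stirling both $\sum_i k_i \log k_i$ and $\sum_i \log f_i$ are $O(L^2 \log L) = O(\log^2 f / \log \log f)$, to which we add the top-level $O(\log c)$.

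For the resilience-to-nodes ratio, $n_L = n_0 \prod_i k_i$ while $f_L \approx \prod_i k_i / 2$ up to constants, so $n_L / f_L \approx 2^L = f^{1/\log\log f} = f^{o(1)}$, which gives $f_L = n_L^{1-o(1)}$ as required by the theorem.

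The chief obstacle is exhibiting a single schedule $k_i$ that satisfies the per-level time constraint and the cumulative space constraint at once. The factor $k_{i+1}^{k_{i+1}}$ introduced by Theorem~\ref{thm:layering} at each application must be dominated by the resilience $\prod_{j>i+1} m_j$ still to be built on top, and pinning down the constant in $k_i \propto (L-i)$ is the delicate point: a large $\alpha$ maximises the per-level growth of $f_i$ but causes the time sum to blow up by a factor $2^{\Theta(L)}$ (as seen already in the na\"ive analysis with constant $k$), whereas a very small $\alpha$ wastes levels and inflates space; an $\alpha$ of order $1$ (say $\alpha = 1/2$) threads the needle, which must be verified by a careful Stirling-type estimate against the constraint $(2m_i)^{k_i} \leq (\prod_{j>i} m_j)/2^{L-i}$ at every level simultaneously.
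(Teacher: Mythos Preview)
Your proposal is essentially correct and reaches the stated bounds, but via a different schedule from the paper's. You let $k_i$ decrease \emph{linearly} from $\Theta(L)$ down to $3$ over $L=\Theta(\log f/\log\log f)$ single applications of Theorem~\ref{thm:layering}. The paper instead groups the recursion into $P$ \emph{phases}: phase $p$ uses a fixed $k_p=4\cdot 2^{P-p}$ (so $k$ decreases \emph{geometrically} across phases) and repeats Theorem~\ref{thm:layering} exactly $R_p=2k_p$ times with that same $k_p$; the total number of iterations $\sum_p R_p$ is again $\Theta(\log f/\log\log f)$. The payoff of the paper's two-level structure is that the time analysis becomes elementary: within a phase the contributions form a geometric series in $f_{i,p}$, and across phases one checks the single algebraic inequality $\frac{f_{p+1}}{f_p}\cdot\frac{k_{p+1}^{k_{p+1}}}{k_p^{k_p}}\ge 2$, so the last few phases dominate. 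Your single linearly-decreasing schedule is structurally simpler but concentrates all the difficulty in the Stirling estimate for $\sum_i k_i^{k_i}/\prod_{j>i}m_j$.

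One caution on that estimate: the per-level bound $(2m_i)^{k_i}/\prod_{j>i}m_j\le 2^{-(L-i)}$ that you propose to verify ``at every level simultaneously'' cannot hold near the bottom of the recursion for any choice of $\alpha$. At $s=L-i=1$ the numerator is at least $(2\lceil 3/2\rceil)^3\ge 27$ while $\prod_{j>i}m_j=m_L=2$, so the ratio exceeds $13$ rather than being at most $1/2$. What is true, and sufficient, is that for any fixed $\alpha<1$ the ratio is bounded by a constant for $s\le s_0(\alpha)$ and decays like $s^{(\alpha-1)s}$ once $s>s_0(\alpha)$, so the sum over all levels is still $O(1)$ and the total time is $O(f_L)$. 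When you write this up, phrase the time bound as a tail estimate of that shape rather than as a uniform per-level inequality.
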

Proving this theorem boils down to choosing $k$ in each iteration as large as possible without violating the bound on the stabilization time. We again rely on Theorem~\ref{thm:layering}, but instead of using a fixed number of blocks at each iteration, we divide the construction into $P$ \emph{phases}. During each phase, we use a different number of blocks and iterations of \theoremref{thm:layering}. The goal is to have the running time of the last phase dominate the running time of earlier phases. 

We set each phase $1 \le p \le P$ to use $k_p = 4\cdot2^{P-p} > 3$ blocks per layer and then iterate \theoremref{thm:layering} exactly $R_p = 2k_p$ times. During phase $p$, we use iteration $1 \le i+1 \le R_p$ to get an algorithm from \theoremref{thm:layering} that tolerates 
\[
f_{i+1,p} = \frac{f_{i,p}k_p}{2} < (f_{i,p} + 1) \left \lceil \frac{k_p}{2}\right\rceil 
\]
failures, where $f_{0,p} = f_{R_{p-1}, p-1}$ and $f_{R_0,0} = f_{0,0}=1$. Thus, for any $p$ and $i$, the values $k_p$ and $f_{i,p}$ satisfy the conditions of Theorem~\ref{thm:layering}. Again, to start the recursion we may use any algorithm tolerating a single fault among $4$ nodes giving $f_{0,0}=1$.

Now, every phase $p$ increases the resilience by a factor of
\[
 d_p = \left(\frac{f_{i+1,p}}{f_{i,p}}\right)^{R_p} = \left(\frac{k_p}{2}\right)^{R_p}.
\]
As there are total of $P$ phases, this means that the total resilience and number of nodes $n$ are given by
\[
 f = f_P = \prod_{p=1}^P d_p = \prod_{p=1}^P \left(\frac{k_p}{2}\right)^{R_p}
\quad \text{and} \quad n = n_P = 4 \prod_{p=1}^P k_p^{R_p}.
\]
In order to get resilience of $f \ge n^{1-\varepsilon}$, where $\varepsilon = o(1)$, we want to ensure that
\[
 \frac{n}{f} = 4 \prod_{p=1}^P 2^{r_p} = 4 \prod_{j=0}^{P-1} 2^{2\cdot2^j} = 2^{2(2^{P}-1)} \le f^\varepsilon,
\]
which is equivalent to
\[
2(2^{P}-1) = 4\cdot2^{P-1} - 2 = k_1 - 2 \le \varepsilon \log f.
\]
Hence, it is feasible to choose $\varepsilon = k_1/\log f$. Observe that $\sum_{p=1}^{P} k_p = 4(2^P - 1) < 2k_1$. It follows that
\[
 \log f = \log \prod_{p=1}^P (k_p / 2)^{R_p} = 2 \sum_{p=1}^{P} k_p \log (k_p/2) = \Theta(k_1 \log k_1),
\]
implying that $k_1=\Theta(\log f/\log \log f)$. We conclude that $\varepsilon=\Theta(1/\log \log f)$.

Let us now analyse the stabilisation time and space complexity of the construction.
\begin{lemma}
 The algorithm stabilises in $O(f)$ rounds.
\end{lemma}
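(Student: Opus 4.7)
The plan is to bound the stabilisation time phase by phase using \theoremref{thm:layering}, and then to show that the geometric schedule chosen for $k_p$ and $R_p$ makes the phase contributions sum to $O(f)$. Concretely, each application of \theoremref{thm:layering} at iteration $i+1$ of phase $p$ adds at most $3(f_{i+1,p}+2)(2m)^{k_p}$ rounds to the stabilisation time; since $k_p = 4\cdot 2^{P-p}$ is even, $2m = k_p$, so the increment is $O(f_{i+1,p}\cdot k_p^{k_p})$. As $f_{i,p}$ grows by a factor $k_p/2\geq 2$ per iteration, the sum over $i=0,\dots,R_p-1$ is dominated by its last term, so phase $p$ contributes $O(F_p\cdot k_p^{k_p})$, where $F_p := f_{R_p,p}$ denotes the resilience attained at the end of phase $p$.

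Summing over phases and absorbing the $O(1)$ base case, the overall stabilisation time satisfies
\[
T \leq O\!\Bigl(\,\sum_{p=1}^{P} F_p \cdot k_p^{k_p}\,\Bigr).
\]
To conclude $T = O(F_P) = O(f)$, I would divide through by $F_P$ and reduce to showing that
\[
\sum_{p=1}^{P}\alpha_p = O(1),\qquad\text{where}\quad \alpha_p := \frac{k_p^{k_p}}{F_P/F_p} = k_p^{k_p}\prod_{q=p+1}^{P}(k_q/2)^{-R_q}.
\]
This is the heart of the argument: the resilience amplification $F_P/F_p$ produced by the later phases must outweigh the $k_p^{k_p}$ overhead that \theoremref{thm:layering} pays during phase $p$.

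For the final computation, substituting $k_p = 4\cdot 2^{P-p}$ and $R_p = 2k_p$ and reindexing $j=P-q$ rewrites $\sum_{q>p} R_q\log(k_q/2)$ as $8\sum_{j=0}^{P-p-1}(j+1)\,2^{j}$. Combining this with the elementary identity $\sum_{j=0}^{N-1}(j+1)2^{j} = (N-1)2^{N}+1$ and with $k_p\log k_p = 4\cdot 2^{P-p}(P-p+2)$, a routine simplification gives
\[
\log\alpha_p = 4\cdot 2^{P-p}\bigl(4-(P-p)\bigr)-8.
\]
This expression is at most $24$ for every $p$, and once $P-p\geq 5$ it turns negative with $|\log\alpha_p|$ growing as $\Omega(2^{P-p})$, so $\alpha_p$ decays doubly exponentially from there on. Hence $\sum_p\alpha_p$ is bounded by an absolute constant independent of $P$ (and therefore of $f$), from which $T = O(F_P) = O(f)$ follows.

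The step I expect to be hardest is exactly this algebraic simplification of $\log\alpha_p$: the near-cancellation between $k_p\log k_p$ and the telescoped sum $\sum_{q>p} R_q\log(k_q/2)$ is delicate, and it is only because both $k_p$ and $R_p$ double as $p$ decreases that the surplus for large $P-p$ is $\Theta(2^{P-p})$ rather than sublinear in $k_p$. With any weaker schedule (say $R_p=k_p$ instead of $2k_p$) the series $\sum_p\alpha_p$ would fail to converge, so the bookkeeping has to be carried out symbolically rather than through loose asymptotic bounds.
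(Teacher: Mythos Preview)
Your argument is correct and follows the same decomposition as the paper: bound the contribution of phase $p$ by $O(F_p\,k_p^{k_p})$ via the geometric growth of $f_{i,p}$, then show that $\sum_p F_p\,k_p^{k_p} = O(F_P)$. The only difference is in the last step: instead of computing $\log\alpha_p$ explicitly, the paper applies a ratio test, observing that for $p\le P-4$ one has
\[
\frac{F_{p+1}\,k_{p+1}^{k_{p+1}}}{F_p\,k_p^{k_p}} \;=\; \left(\frac{k_{p+1}}{2}\right)^{R_{p+1}}\cdot\frac{k_{p+1}^{k_{p+1}}}{(2k_{p+1})^{2k_{p+1}}} \;=\; \left(\frac{k_{p+1}}{16}\right)^{k_{p+1}} \;\ge\; 2,
\]
so the series is geometric with the last three or four terms (each an absolute constant times $f$) dominating—this reaches the same conclusion with less algebra than your closed-form evaluation of $\log\alpha_p$, but the substance is identical.
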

\begin{proof}
To analyse the stabilisation time, we first bound the stabilisation time of each phase $p$ separately. By \theoremref{thm:layering}, iteration $i+1$ of phase $p$ has stabilisation time
\[
 T_{i+1,p} = T_{i,p} + O\!\left(f_{i+1,p} \cdot k_p^{k_p}\right).
\]
Analogously to the proof of \theoremref{thm:fixed-blocks}, we again get a geometric series and the total stabilisation time of phase $p$ is bounded by
\[
 \sum_{i=1}^{R_p} T_{i+1,p} = O(T_{R_p,p}) = O\!\left(f_p \cdot k_p^{k_p}\right).
\]
Therefore, the total stabilisation time is 
\[
 T = \sum_{p=1}^{P} T_p = O\!\left(\sum_{p=1}^{P} f_p \cdot k_p^{k_p}\right).
\]
For $1 \le p \le P-4$, using the shorthand $\kappa = k_{p+1}$, we can bound
\[
\frac{f_{p+1}}{f_{p}} \cdot \frac{\kappa^{\kappa}}{k_p^{k_p}} = d_{p+1} \frac{\kappa^{\kappa}}{\left(2\kappa\right)^{2\kappa}} 
= \frac{\kappa^{2\kappa}}{2^{2\kappa}}\cdot \frac{\kappa^{\kappa}}{\left(2\kappa\right)^{2\kappa}}
= \left(\frac{\kappa}{16}\right)^{\kappa} \ge 2.
\]
Thus, we get a geometric series
\begin{align*}
\sum_{p=1}^{P-3} T_p \le 2 T_{P-2} = O(f), 
\end{align*}
and since also $T_{P-2} + T_{P-1} + T_{P} = O(f)$, the stabilisation time of all phases is bounded by $O(f)$.
\end{proof}

\begin{lemma}
 Every node uses at most $O(\log^2 f /\log \log f + \log c)$ bits of memory.
\end{lemma}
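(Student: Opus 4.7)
The plan is to add up the state-bit contributions of the base algorithm and of each invocation of \theoremref{thm:layering} in the recursive construction, and show that the sum collapses to $O(\log^2 f/\log\log f + \log c)$. By \theoremref{thm:layering}, each application with output counter size $C$ adds exactly $\lceil \log(C+1)\rceil + 1 = O(\log C)$ bits of state on top of the input algorithm. Thus once we know what $C$ must be at every step, the bound is a matter of summing a short geometric series.

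I would first determine the smallest legal choice of $C$ at each iteration. Iteration $i+1$ of phase $p$ produces a counter $C_{i+1,p}$ which is then fed as the input $c$ of the next application; that next application requires its input $c$ to be a multiple of $3(F'+2)(2m')^{k'}$, where $F',m',k'$ are the parameters of the next iteration (either iteration $i+2$ of phase $p$ or iteration $1$ of phase $p+1$). Picking $C_{i+1,p}$ to be precisely this minimum yields
\[
\log C_{i+1,p} \;=\; O(\log f + k' \log k'),
\]
since $F' \le f$ and $2m' = k'$. For the topmost application we have no successor constraint, so we set $C_{R_P,P}$ equal to the desired final counter size $c$, contributing $O(\log c)$ bits. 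For the base case I use the $0$-resilient counter on one node (or the $1$-resilient counter from \cite{dolev15counting}) with an output counter large enough to feed iteration $1$ of phase $1$; this has size $O(f \cdot k_1^{k_1})$ and hence costs $O(\log f + k_1 \log k_1) = O(\log f)$ bits, since $k_1\log k_1 = \Theta(\log f)$ from the analysis above.

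Next I would sum the contributions across all iterations. There are $R_p = 2k_p$ iterations in phase $p$, and $k_p = 4\cdot 2^{P-p}$, so $\sum_p R_p = 2\sum_p k_p = O(k_1)$ and $\sum_p R_p k_p \log k_p = O(k_1^2 \log k_1)$ using the geometric decay of $k_p$. Therefore
\begin{align*}
\sum_{p,i} O(\log f + k_{\text{next}} \log k_{\text{next}})
  &\;=\; O\!\left(\log f \cdot k_1\right) + O\!\left(k_1^2 \log k_1\right).
\end{align*}
Plugging in $k_1 = \Theta(\log f / \log\log f)$ (established earlier in the section), both terms evaluate to $O(\log^2 f/\log\log f)$. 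Adding the $O(\log f)$ base-case cost and the $O(\log c)$ top-level cost yields the claimed bound.

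The only subtlety is the bookkeeping: one must be careful that the "$C$" counted in \theoremref{thm:layering}'s space bound is the \emph{output} counter of that application (which equals the \emph{input} counter of the next application), so the largest $k^k$-type factor inside a logarithm is $k_{\text{next}}^{k_{\text{next}}}$ rather than some cumulative product. Once this is recognised, the $k_p\log k_p$ terms remain inside a logarithm rather than multiplying $f$, and the doubly-exponential looking factors disappear. The rest is routine geometric summation.
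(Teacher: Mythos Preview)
Your proposal is correct and follows essentially the same approach as the paper: both sum the $O(\log C)$ contributions from each application of \theoremref{thm:layering}, bound the per-iteration counter size by $O(f\cdot k_p^{k_p})$, use $\sum_p k_p = O(k_1)$ and $\sum_p k_p^2\log k_p = O(k_1^2\log k_1)$ via geometric decay, and then substitute $k_1 = \Theta(\log f/\log\log f)$. Your explicit bookkeeping about the output counter of one iteration feeding the divisibility constraint of the next is a slightly more careful rendering of what the paper states tersely as ``the counter size is $O(f\cdot k_p^{k_p})$ in each iteration,'' but the argument is the same.
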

\begin{proof}
By Theorem~\ref{thm:layering}, the number of state bits increases each iteration by $\Theta(\log C_i)$ bits, where $C_i$ is the counter size needed for iteration $i$. During phase $p$, the counter size is $O(f \cdot k_p^{k_p})$ in each iteration. There are exactly $R_p = 2k_p$ iterations phase $p$, and thus the number of bits we need is bounded by
\[
 S_p = O\!\left( k_p \left(\log f + k_p \log k_p\right) \right) = O\!\left(k_p \log f + k_p^2 \log k_p\right).
\]
From earlier computations, we know that $\sum_{i=1}^P k_p = O(k_1)$ and $k_1 = O(\varepsilon \log f)$. Thus, we use
\begin{align*}
 S = \sum_{p=1}^P S_p &= O\!\left(\sum_{p=1}^P (k_p \log f + k_p^2 \log k_p) \right) \\
   &= O\!\left(\log f \sum_{p=1}^P k_p + \sum_{p=1}^P k_p^2 \log k_p \right) \\
   &= O\!\left(k_1 \log f  + k_1^2 \log k_1 \right) \\
   &= O(\varepsilon \log^2 f + \varepsilon^2 \log^2 f (\log \varepsilon + \log \log f)) \\
   &= O\!\left(\frac{\log^2 f}{\log \log f}\right)
\end{align*}
bits in total, as $\varepsilon = \Theta(1/\log \log f)$. Storing the output of the resulting $c$-counter introduces additional $O(\log c)$ bits.
\end{proof}

\section{Saving on Communication Using Randomization}\label{sec:random}

So far we have considered the model where each node broadcasts its entire state every round. In the case of the algorithm given in \theoremref{thm:adaptive}, every node will send $S = O(\log^2 f / \log \log f + \log c)$ bits in each round. As there are $\Theta(n^2)$ communication links, the total number of communicated bits in each round is $\Theta(n^2 S)$. In this section, we consider a randomised variant of the algorithm that achieves better message and bit complexities in a slightly different communication model.

%###
\subsection{The Pulling Model}
%###
Throughout this section we consider the following model, where in every synchronous round:
\begin{enumerate}[noitemsep]
  \item each processor contacts a subset of other nodes by \emph{pulling} their state,
  \item each contacted node \emph{responds} by sending their state to the pulling nodes,
  \item all processors update their local state according to the received messages.
\end{enumerate}
As before, faulty nodes may respond with arbitrary states that can be different for different pulling nodes. We define the (per-node) message and bit complexities of the algorithm as the maximum number of messages and bits, respectively, pulled by a non-faulty node in any round.

The motivation for this model is that it permits to attribute the energy cost for a message to the pulling node. In a circuit, this means that the pulling node provides the energy for the signal transitions of the communication link: logically, the link is part of the pulling node's circuitry, whereas the ``sender'' merely spends the energy for writing its state into the register from which all its outgoing links read.

Our goal will be to keep the number of pulls by non-faulty nodes small at all times. This way a small energy budget per round per node suffices in correct operation. By limiting the energy supply of each node, we can also effectively limit the energy consumption of the Byzantine nodes.

\subsection{The High-Level Idea of the Probabilistic Construction}

To keep the number of pulls, and thus number of messages sent, small, we modify the construction of Theorem~\ref{thm:layering} to use random sampling where useful. Essentially, the idea is to show that \emph{with high probability} a small set of sampled messages accurately represents the current state of the system and the randomised algorithm will behave as the deterministic one. There are two steps where the nodes rely on information broadcast by the all the nodes: the majority voting scheme over the blocks and our variant of the phase king algorithm. Both can be shown to work with high probability by using concentration bound arguments.

More specifically, for any constant $\kappa \ge 1$ we can bound the probability of failure by $\totaln^{-\kappa}$ by sampling $M = \Theta(\log \eta)$ messages; here $\eta$ denotes the total number of nodes in the system. The idea is to use a union bound over all levels of recursion, nodes, and considered rounds, to show that the sampling succeeds with high probability in all cases. For the randomised variant of Theorem~\ref{thm:layering}, we will require the following additional constraint: when constructing a counter on $N=kn$ nodes, the total number of failures is bounded by $F < \frac{N}{3+\gamma}$, where $\gamma>0$ is some constant. Since the resilience of the recursive construction is suboptimal anyway, this constraint is always going to be satisfied. This allows us to construct \emph{probabilistic synchronous $c$-counters} in the sense that the counter stabilises in time $T$ if for all rounds $t \ge T$ all non-faulty nodes count correctly with probability $1-\eta^{-\kappa}$.

%###
\subsection{Sampling Communication Channels}\label{ssec:sampling}
%###

There are two steps in the construction of \theoremref{thm:layering} where we rely on deterministic broadcasting: the majority sampling for electing a leader block and the execution of the phase king protocol. We start with the latter.

%###
\paragraph{Randomised Phase King.} 
%###
Instead of checking whether at least $N-F$ of all messages have the same value, we check whether at least a fraction of $2/3$ of the sampled messages have the same value. Similarly, when checking for at least $F+1$ values, we check whether a fraction $1/3$ of the sampled messages have this value.

\begin{lemma}\label{lemma:whp}
 Let $x \in [C] \cup \{\infty\}$ and suppose a node samples $M$ values from the other nodes. Then there exists $M_0(\eta,\kappa,\gamma) = \Theta(\log \eta)$ so that $M\geq M_0$ implies the following with high probability.
 \begin{enumerate}[label={(\alph*)}]
  \item If all non-faulty nodes agree on value $x$, then $x$ is seen at least $2/3\cdot M$ times.
  \item If the majority of non-faulty nodes have value $x$, then more than $1/3\cdot M$ sampled values will be $x$.
  \item If at least $2/3\cdot M$ sampled values have value $x$, then $x$ is a majority value.
 \end{enumerate}
\end{lemma}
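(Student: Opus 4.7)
The plan is to view each of the three properties as a concentration statement about the outcome of drawing $M$ uniformly random indices from $[N]$, and then to verify that in each case the expected frequency of ``$x$-responses'' is separated from the relevant threshold by a constant gap depending on $\gamma$. Since faulty nodes simply produce some deterministic response per pull while non-faulty nodes answer truthfully, the only randomness is in who is sampled. Write $X$ for the number of sampled faulty nodes, $Y$ for the number of sampled non-faulty nodes whose state is $x$, and $Z\leq X$ for the number of sampled faulty nodes whose response equals $x$; then the number $M_x$ of $x$-responses equals $Y+Z$. All three bounds will follow from a multiplicative Chernoff bound applied to $X$, $Y$, or $Y+Z$, together with the hypothesis $F<N/(3+\gamma)$ and hence $\mathbb{E}[X]/M<1/(3+\gamma)$.

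For part (a), if every non-faulty node holds $x$ then $M_x\geq M-X$, and $\mathbb{E}[X]<M/(3+\gamma)$; the slack to the threshold $M/3$ is $\Theta(\gamma)M$, so Chernoff gives $X\leq M/3$ with probability $1-\exp(-\Omega(\gamma^2 M))$. For part (b), the hypothesis yields at least $\lceil(N-F+1)/2\rceil$ non-faulty nodes holding value $x$, a $\geq(N-F)/(2N)>(2+\gamma)/(2(3+\gamma))>1/3$ fraction of $[N]$; thus $\mathbb{E}[Y]$ exceeds $M/3$ by a $\Theta(\gamma)$ margin, and Chernoff gives $M_x\geq Y>M/3$ with the same kind of bound. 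For part (c), I argue the contrapositive: if $x$ is not held by a majority of non-faulty nodes, then the non-faulty nodes with value $x$ are at most $(N-F)/2$, so
\[
\mathbb{E}[Y+Z]\;\leq\;\mathbb{E}[Y]+\mathbb{E}[X]\;\leq\;\frac{(N-F)M}{2N}+\frac{FM}{N}\;=\;\frac{(N+F)M}{2N}\;<\;\frac{(4+\gamma)M}{2(3+\gamma)},
\]
which is $2M/3$ minus $\Theta(\gamma)M$. Chernoff then yields $M_x<2M/3$ with error $\exp(-\Omega(\gamma^2 M))$; the contrapositive establishes (c).

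Choosing $M_0(\eta,\kappa,\gamma)=\Theta(\gamma^{-2}\log\eta)=\Theta(\log\eta)$ for fixed $\gamma$, the three error probabilities are each at most $\eta^{-(\kappa+O(1))}$, and a union bound over (a)--(c) completes the proof of the lemma. The subsequent global union bound over levels of recursion, nodes, and rounds is absorbed later into the choice of $\kappa$.

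The main obstacle is the tightness of part (c): when $\gamma=0$ one has $(N+F)/(2N)=2/3$, so the expected count matches the threshold exactly and Chernoff produces no useful bound. The assumption $F<N/(3+\gamma)$ with $\gamma>0$ (which is harmless because the deterministic construction already has suboptimal resilience) is precisely what opens a constant-size gap needed for concentration in all three parts simultaneously; checking that this gap is $\Theta(\gamma)$ in each case, rather than a smaller function of $\gamma$, is the only quantitatively delicate point.
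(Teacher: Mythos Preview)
Your proposal is correct and follows essentially the same approach as the paper: both arguments reduce each of (a)--(c) to a Chernoff bound on the number of sampled nodes of a given type, exploiting the assumption $F<N/(3+\gamma)$ to obtain a $\Theta(\gamma)$ gap between the expectation and the relevant threshold. The paper phrases (a) and (c) in terms of the complementary counts (non-faulty samples in (a), samples with value $\neq x$ in (c)) rather than your $X$ and $Y+Z$, but the two formulations are dual and yield identical bounds; your explicit decomposition $M_x=Y+Z$ with $Z\le X$ makes the handling of Byzantine responses slightly more transparent than the paper's terser treatment.
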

\begin{proof}
Define $\delta = 1-\frac{2}{3}\cdot\frac{3+\gamma}{2+\gamma}$ and let the random variable $X$ denote the number of $x$ values sampled from \emph{non-faulty} nodes.

(a) If all non-faulty nodes agree on value $x$, then 
\[
\E[X] = \left(1-\frac{F}{N}\right)M > \frac{2+\gamma}{3+\gamma}M.
\]
As $\delta$ satisfies $(1-\delta)\E[X] > 2/3\cdot M$, it follows from Chernoff's bound that
\begin{align*}
 \Pr\left[X < \frac{2}{3}M\right] &\le \Pr[X < (1-\delta) \E[X]] \\
               &\le \exp\left( -\frac{\delta^2}{2} \E[X] \right) \\
               &\le \exp\left( -\delta^2 \frac{2+\gamma}{2(3+\gamma)} M_0 \right).
\end{align*}
For sufficiently large $M_0(N,\kappa,\gamma) = \Theta(\log N)$ this probability is bounded by $N^{-\kappa}$.

 (b) If a majority of non-faulty nodes have value $x$, then $\E[X] \ge \frac{1}{2}\cdot\frac{2+\gamma}{3+\gamma}M$. As above, by picking the right constants and using concentration bounds, we get that
\begin{align*}
 \Pr\left[X \le \frac{1}{3}M\right] &\le \Pr[X < (1-\delta) \E[X]] \\
                &\le \exp\left( -\frac{\delta^2}{2} \E[X] \right) \\
               &\le \exp\left( -\delta^2 \frac{2+\gamma}{4(3+\gamma)} M_0 \right) \le N^{-\kappa}.
\end{align*}

 (c) Suppose the majority of non-faulty nodes have values different from $x$. Defining $\bar{X}$ as the random variable counting the number of samples with values different from $x$ and arguing as for (b), we see that
 \begin{align*}
 \Pr\left[X \geq \frac{2}{3}M\right] &=  \Pr\left[\bar{X} < \frac{1}{3}M\right] \le N^{-\kappa},
 \end{align*}
 where again we assume that $M_0(N,\kappa,\gamma) = \Theta(\log N)$ is sufficiently large. Thus, $X\geq 2/3\cdot M$ implies with high probability that the majority of non-faulty nodes have value~$x$.
\end{proof}

As a corollary, we get that when using the sampling scheme, the execution of the phase king essentially behaves as in the deterministic broadcast case.

\begin{corollary}\label{coro:randomised_phase_king}
When executing the randomised variant of the phase king protocol from \sectionref{sec:construction} for $\eta^{O(1)}$ rounds, the statements of \lemmaref{lemma:establish_agreement} and \lemmaref{lemma:maintain_agreement} hold with high probability.
\end{corollary}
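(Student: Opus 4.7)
The plan is to obtain Corollary~\ref{coro:randomised_phase_king} via a union bound over all sampling events that occur during the considered $\eta^{O(1)}$ rounds, combined with the observation that once every sampled threshold check returns the ``correct'' outcome, the proofs of Lemmas~\ref{lemma:establish_agreement} and~\ref{lemma:maintain_agreement} go through almost verbatim. Concretely, first I would inspect the instruction sets in Table~\ref{tab:instr} and enumerate the places where the deterministic algorithm consults bulk counts: the ``fewer than $N-F$ nodes sent $a[v]$'' test in $I_{3\ell}$, the test $z_{a[v]} \ge N-F$ in $I_{3\ell+1}$, and the test $\min\{j : z_j > F\}$ in $I_{3\ell+1}$. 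Each of these is replaced in the randomised variant by the corresponding sampled threshold (at least $\tfrac{2}{3}M$, respectively more than $\tfrac{1}{3}M$).

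Next I would define the good event $\mathcal{G}$: for every non-faulty node $v$, every round $q$ in the window of length $\eta^{O(1)}$, and every value whose count is checked (at most $M = \Theta(\log \eta)$ distinct values per check), the sampled outcome is faithful in the sense of Lemma~\ref{lemma:whp}(a)--(c). Since Lemma~\ref{lemma:whp} gives failure probability $\eta^{-\kappa}$ per event, and the total number of events is bounded by $\eta \cdot \eta^{O(1)} \cdot O(M) = \eta^{O(1)}$, a union bound yields $\Pr[\mathcal{G}^c] \le \eta^{-\kappa'}$ for any desired constant $\kappa'$ provided that $\kappa$ (and hence the constant hidden in $M = \Theta(\log \eta)$) is chosen sufficiently large. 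This step is purely mechanical and relies only on the constraint $F < N/(3+\gamma)$ already imposed for the randomised construction, which is what makes Lemma~\ref{lemma:whp} applicable.

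Conditioned on $\mathcal{G}$, I would then re-examine the deterministic proofs. In Lemma~\ref{lemma:maintain_agreement}, universal agreement on $x$ implies via Lemma~\ref{lemma:whp}(a) that every non-faulty node sees at least $\tfrac{2}{3}M$ copies of $x$ among its samples, while pigeonhole prevents any $y \neq x$ from reaching this threshold; hence the very same case analysis shows $a_{q+1}[v] = x+1 \bmod C$ and $d_{q+1}[v]=1$. In Lemma~\ref{lemma:establish_agreement}, the key deterministic facts were (i) at most one value can be claimed to appear ``$\ge N-F$ times'' at a non-faulty node and (ii) a value claimed ``$> F$ times'' must be supported by at least one non-faulty node; both carry over because at most one value can be sampled with frequency $\ge \tfrac{2}{3}M$, and a sampled frequency $> \tfrac{1}{3}M$ implies (by Lemma~\ref{lemma:whp}(c) applied to the complement, i.e.\ (b) used contrapositively) that a corresponding majority or near-majority of non-faulty nodes actually hold that value. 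Every inequality in the original proof that invoked $F < N/3$ is replaced by the analogous fact that two disjoint $\tfrac{2}{3}M$-subsets of samples cannot coexist, and the reasoning concludes identically.

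The main obstacle I expect is the bookkeeping of the third step: making sure that every deterministic threshold check in the phase king proof maps to one of the three parts of Lemma~\ref{lemma:whp} in a way that preserves the one-sided guarantees, and in particular that the uniqueness arguments ``at most one value can clear the $N-F$ bar'' still hold in the sampled regime. The probabilistic union bound itself is standard; the subtle point is verifying that the constant $\tfrac{2}{3}$ versus $\tfrac{1}{3}$ gap is exactly what replaces the $N-F$ versus $F$ gap throughout the argument, with no hidden reliance on exact counts.
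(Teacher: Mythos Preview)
Your proposal is correct and follows essentially the same approach as the paper's own proof: identify the threshold replacements ($N-F \to \tfrac{2}{3}M$ and $F+1 \to \tfrac{1}{3}M$), invoke Lemma~\ref{lemma:whp} for each sampled check, and take a union bound over the $\eta^{O(1)}$ events arising from at most $\eta$ non-faulty nodes across $\eta^{O(1)}$ rounds. Your write-up is in fact more detailed than the paper's, which simply asserts that one can ``argue analogously'' to Lemmas~\ref{lemma:establish_agreement} and~\ref{lemma:maintain_agreement}; your flagging of the threshold-mapping bookkeeping as the delicate step is accurate.
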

\begin{proof}
The algorithm uses two thresholds, $N-F$ and $F+1$. As discussed, these are replaced by $2/3\cdot M$ and $1/3\cdot M$ when taking $M$ samples. Using the statements of \lemmaref{lemma:whp}, we can argue analogously to the proofs of \lemmaref{lemma:establish_agreement} and \lemmaref{lemma:maintain_agreement}; we apply the union bound over all rounds and samples taken by non-faulty nodes ($N-F<\eta$ per round), i.e., over $\eta^{O(1)}$ events.
\end{proof}

%###
\paragraph{Randomised Majority Voting.}
%###

It remains to handle the case of majority voting in the construction of \theoremref{thm:layering}. Consider some level of the recursive construction, in which we want to construct a counter of $N=kn$ nodes out of $k$ $n$-node counters. If $N \ll \log \eta /\log \log \eta$, we can perform the step in the recursive construction using the deterministic algorithm, that is, pulling from all $kn$ nodes. Otherwise, similar to the above sampling scheme for randomised phase king, each node will from each block uniformly sample $M\geq M_0(\eta,\kappa,\gamma) = \Theta(\log \eta)$ states. Again by applying concentration bounds, we can show that with high probability, the non-faulty nodes sample a majority of non-faulty nodes from non-faulty blocks. Thus, we can get a probabilistic version of \lemmaref{lemma:round-counter}.

Recall from \sectionref{sec:construction} that
\begin{align*}
 b^{i} &= \majority\{ b[i,j] \mid j \in [n] \}, \\
 B    &= \majority\{ b^{i} \mid i \in [k] \}, \\
 R    &= \majority\{ r[B,j] \mid j \in [n] \},
\end{align*}
where the $\majority$ function may output an arbitrary value if there is no majority of non-faulty nodes supporting the same value. Analogously to \sectionref{sec:construction}, we define the following local variables at node $(i,j)$ in round $q$:
\begin{align*}
 b^{i'}_q[i,j] &= \majority\{ b_q[i',j'] \mid (i',j') \mbox{ sampled by }(i,j) \mbox{ in round }q \}, \\
 B_q[i,j]    &= \majority\{ b^{i'}_q[i,j] \mid i' \in [k] \}, \\
 R_q[i,j]    &= \majority\{ r[B_q[i,j],j'] \mid (i',j') \mbox{ sampled by }(i,j) \mbox{ in round }q \}.
\end{align*}
Here we sample with repetition and the above sets are multisets; this means all samples from a block are independent and we can readily apply Chernoff's bound.

\begin{lemma}\label{lemma:leader_probabilistic}
Suppose $x\in [k]$ is a non-faulty block, $M_0 = \Theta(\log \eta)$ is sufficiently large, and all non-faulty blocks count correctly in round $q$. If for all non-faulty blocks $i$ and non-faulty nodes $(i,j)$ it holds that $b[i,j]=x$, then with high probability
\begin{enumerate}
  \item $b^{i'}_q[i,j]=x$ for all non-faulty blocks $i'$,
  \item $B_q[i,j]=x$, and
  \item $R_q[i,j]=r_q[x,j']$ for an arbitrary non-faulty node $(B,j')$ in block $x$.
\end{enumerate}
\end{lemma}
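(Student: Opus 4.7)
The plan is to handle the three claims in order, extracting each from a Chernoff bound of exactly the form used in \lemmaref{lemma:whp}, and then to close with a single union bound over the $O(k)$ sampling events that $(i,j)$ performs in round $q$.

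For (1), fix any non-faulty block $i'$. By hypothesis, every non-faulty node of $i'$ broadcasts $x$ for $b$, and the $F<N/(3+\gamma)$ constraint assumed in the randomised setting (and propagating to every level of the recursion) ensures that the non-faulty nodes of $i'$ form a fraction strictly greater than $2/3$ of the block. Sampling $M = \Theta(\log \eta)$ times with repetition from block $i'$ therefore yields, in expectation, more than $2M/3$ samples equal to $x$, and the Chernoff estimate proving \lemmaref{lemma:whp}(a) applies verbatim to conclude that a strict majority of the samples equals $x$, except with probability $\eta^{-\kappa'}$ for any desired $\kappa'$, provided $M_0$ is chosen large enough. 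Hence $b^{i'}_q[i,j]=x$ with high probability.

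Part (2) is then deterministic conditional on the high-probability events of (1): the assumption $F<(f+1)m=(f+1)\lceil k/2\rceil$ forces fewer than $m$ blocks to be faulty, so a strict majority of the $k$ values $b^{i'}_q[i,j]$ equal $x$ and the outer $\majority$ evaluates to $x$, giving $B_q[i,j]=x$. For (3), once $B_q[i,j]=x$ is established, the samples used to form $R_q[i,j]$ are drawn with repetition from the non-faulty, stabilised block $x$; all non-faulty nodes of $x$ share a common value $r_q[x,j']$ by assumption, and the same Chernoff argument as in (1), now applied within block $x$, shows that a strict majority of the samples returns this common value, except with probability $\eta^{-\kappa'}$.

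A union bound over the at most $k+1$ sampling events at $(i,j)$ then turns the per-event failure probability $\eta^{-\kappa'}$ into the claimed joint $\eta^{-\kappa}$, at the cost of inflating $M_0$ by a constant factor; an analogous inflation would later absorb a second union bound over all non-faulty $(i,j)$ and all relevant rounds, in the spirit of the proof of \corollaryref{coro:randomised_phase_king}. The main technical obstacle is, in my view, merely bookkeeping: one must keep two non-faultiness regimes straight, invoking the strengthened bound $f/n<1/(3+\gamma)$ for the concentration steps in (1) and (3), which occur \emph{inside} a single non-faulty block, while invoking the cross-block bound $F<(f+1)m$ only for the deterministic majority step in (2). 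Conflating the two would break the Chernoff estimates, since only the intra-block bound guarantees a constant-factor surplus of honest senders at the level where sampling actually occurs.
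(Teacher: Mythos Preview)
Your approach is essentially the paper's: Chernoff on the intra-block samples for (1) and (3), with (2) following deterministically from the block-level majority once (1) holds. One correction, however: you lean on a strengthened intra-block bound $f/n<1/(3+\gamma)$, arguing that the global constraint $F<N/(3+\gamma)$ ``propagates'' down. The paper does not assume this, and it is not needed. Inside a non-faulty block the only guarantee is $f<n/3$, which already yields $\E[X]>2M/3$ for the count $X$ of honest samples; the paper then applies Chernoff with $\delta=1/4$ to get $\Pr[X\le M/2]\le \eta^{-\kappa}$ directly, rather than routing through \lemmaref{lemma:whp}(a) and its $2M/3$ threshold. Your version still goes through whenever the recursion happens to give $f/n<1/(3+\gamma)$, but the cleaner and always-valid argument is the paper's, using only $f<n/3$. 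Your closing remark about ``conflating the two regimes'' is thus slightly off: the $\gamma$ slack is needed for the randomised phase king (\lemmaref{lemma:whp}), not for the block-level majority sampling here.
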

\begin{proof}
Consider a non-faulty block (recall that a block is non-faulty if it has at most $f$ faulty nodes). Let $X$ denote the number of states of non-faulty nodes sampled from this block by $(i,j)$ in round $q$. As $f<n/3$, we have that $\E[X] \ge M(n-f)/n> 2/3\cdot M$. Applying Chernoff's bound for $\delta=1/4$ and choosing sufficiently large $M_0(\eta, \kappa) = \Theta(\log \eta)$, we obtain that
\[
 \Pr[X \le M/2] \le \Pr[X \le (1-\delta) \E[X]] \le \exp\!\left(-\frac{\delta^2}{2} \E[X]\right) \leq \eta^{-\kappa}.
\]
Applying the union bound to all nodes and all blocks, it follows that, with high probability, non-faulty nodes always sample a majority of non-faulty nodes from non-faulty blocks. The first statement follows, immediately yielding the second as a majority of the blocks is non-faulty. The third statement now holds because we assume that non-faulty blocks count correctly and $x$ is non-faulty.
\end{proof}

%###
\subsection{Randomised Resilience Boosting}\label{ssec:randomised-layering}
%###

Define $\mathcal{P}(n, f, c, \eta, \kappa)$ as the family of probabilistic synchronous $c$-counters on $n$ nodes and resilience $f$, where probabilistic means that an algorithm $\vec P\in \mathcal{P}(n, f, c, \eta, \kappa)$ of stabilisation time $T(\vec P)$ merely guarantees that it counts correctly with probability $1-\eta^{-\kappa}$ in rounds $t \ge T(\vec P)$. This means that with high probability, eventually all non-faulty nodes agree on a common clock for sufficiently many rounds. Together with \corollaryref{coro:randomised_phase_king}, we obtain a randomized variant of \theoremref{thm:layering}.

\pagebreak[2]

\begin{theorem}\label{thm:layering_randomised}
Given $n,f,\eta \in \N$, pick new parameters $N,F,C\in \N$ and $\kappa>0$, where
\begin{itemize}[noitemsep]
  \item the number of nodes $N=kn\leq \eta$ for some number of blocks $3\leq k\in \N$,
  \item the resilience $F< (f+1)m$, where we abbreviate $m=\lceil k/2\rceil$,
  \item $C>1$ is the new counter size, and
  \item $\kappa$ is a constant.
\end{itemize}
Choose any $c\in \N$ that is an integer multiple of $3(F+2) (2m)^k$. Then for any $\vec{A}\in \mathcal{A}(n, f, c)$, there exists $\vec P \in \mathcal{P}(N,F,C,\eta,\kappa)$ with the following properties.
\begin{enumerate}[noitemsep]
 \item $T(\vec P) = T(\vec A) + 3(F+2) (2m)^k$, and
 \item $S(\vec P) = S(\vec A) + \left\lceil \log (C+1) \right\rceil + 1$.
 \item Each node pulls $O(k\log \eta)$ messages in each round.
\end{enumerate}
\end{theorem}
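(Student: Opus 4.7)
The plan is to run the same algorithm as in \theoremref{thm:layering}, with each non-faulty node $(i,j)$ replacing broadcast-and-majority by sampling: (a)~in each round it pulls $M=\Theta(\log\eta)$ states from each block to compute the local variables $b^{i'}_q[i,j]$, $B_q[i,j]$, and $R_q[i,j]$ as defined in \sectionref{ssec:sampling}, and (b)~it pulls $M$ states to evaluate the phase-king instructions with the relaxed thresholds $(2/3)M$ and $(1/3)M$ in place of $N-F$ and $F+1$. The memory footprint is unchanged from \theoremref{thm:layering}, since no extra registers beyond $\vec A_i$'s state and the phase-king pair $(a[v],d[v])$ are needed, and no additional rounds are introduced, so the claimed bounds on $T(\vec P)$ and $S(\vec P)$ match those of the deterministic construction verbatim. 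The per-round pull count is $M$ per block across $k$ blocks plus $O(M)$ samples for the phase king, giving $O(kM)=O(k\log\eta)$.

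First I would invoke \lemmaref{lemma:leader_probabilistic} in place of \lemmaref{lemma:round-counter}: \lemmaref{lemma:rounds} concerns only the deterministic dynamics of the block counters $\vec A_i$ and so applies without change, and together with \lemmaref{lemma:leader_probabilistic} it yields a round $t\leq T(\vec A)+c_k-\tau$ such that, with high probability, the variables $R_q[i,j]$ agree across all non-faulty $(i,j)$ and advance by one modulo $\tau$ for $\tau$ consecutive rounds. Next, as in the proof of \theoremref{thm:layering}, $\tau=3(F+2)$ guarantees that at least $F+1\geq 2$ indices $\ell\in[F+2]$ are non-faulty \emph{and} have their triple $I_{3\ell},I_{3\ell+1},I_{3\ell+2}$ jointly executed during this window, so \corollaryref{coro:randomised_phase_king} provides the hypotheses of the randomised analogue of \lemmaref{lemma:establish_agreement}, establishing common agreement on some $a[v]=x\in[C]$, $d[v]=1$ by round $T(\vec A)+c_k$.

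Once agreement is reached, each subsequent round is handled by the randomised analogue of \lemmaref{lemma:maintain_agreement} from \corollaryref{coro:randomised_phase_king}: conditioned on agreement holding at round $q-1$, the freshly drawn samples of round $q$ preserve it with probability at least $1-\eta^{-\kappa}$, which is exactly the per-round guarantee required by $\mathcal{P}(N,F,C,\eta,\kappa)$. The pull complexity and space bounds have already been verified above, so this finishes the structural part of the argument.

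The main obstacle is the union-bound bookkeeping behind the sampling constant in $M=\Theta(\log\eta)$: it must be chosen large enough that the combined failure probability of \lemmaref{lemma:leader_probabilistic} (applied at every non-faulty node in every round of the stabilisation window) and of \corollaryref{coro:randomised_phase_king} (applied over the $\eta^{O(1)}$ stabilisation rounds) is at most $\eta^{-\kappa}$, while simultaneously ensuring that each post-stabilisation round independently fails with probability at most $\eta^{-\kappa}$. Since $N\leq\eta$, inflating the hidden constant in $M$ by a factor depending only on $\kappa$ absorbs both requirements, and the theorem follows.
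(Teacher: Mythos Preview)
Your proposal is correct and follows the same route as the paper: the paper does not give a standalone proof of \theoremref{thm:layering_randomised} but simply presents it as the outcome of combining \lemmaref{lemma:leader_probabilistic} (replacing the deterministic majority voting behind \lemmaref{lemma:round-counter}) with \corollaryref{coro:randomised_phase_king} (replacing \lemmaref{lemma:establish_agreement} and \lemmaref{lemma:maintain_agreement}), exactly as you do. Your accounting for the pull count, state, and stabilisation time is also the intended one.

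One small slip: in the phase-king step you write that ``at least $F+1\ge 2$ indices $\ell\in[F+2]$ are non-faulty \emph{and} have their triple $I_{3\ell},I_{3\ell+1},I_{3\ell+2}$ jointly executed''. That overcounts the non-faulty leaders. What the window of length $\tau=3(F+2)$ actually gives you is that at least $F+1$ of the $F+2$ triples are executed in full, while separately at most $F$ of the $F+2$ candidate leaders are faulty; pigeonhole then yields at least one non-faulty $\ell$ whose triple is executed, which is all that \lemmaref{lemma:establish_agreement} (and its randomised analogue) needs. This does not affect the validity of your argument, only the phrasing.
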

Note that we can choose to replace $\vec{A}\in \mathcal{A}(n, f, c)$ by $\vec{Q}\in \mathcal{P}(n, f, c, \eta, \kappa)$ when applying this theorem, arguing that with high probability it \emph{behaves} like a corresponding algorithm $\vec{A}\in \mathcal{A}(n, f, c)$ for polynomially many rounds. Applying the recursive construction from \sectionref{sec:recursion} and the union bound, this yields \corollaryref{cor:random-alg}. By always choosing $k = O(\log \eta)$, each node pulls $O(\log^2 \eta)$ messages from other nodes for each layer.

\begin{corollary}\label{cor:random-alg}
For any $c>1$, there exist probabilistic synchronous $c$-counters with a resilience of $f=n^{1-o(1)}$ that stabilise in $O(f)$ rounds, use $O(\log^2 f  / \log \log f + \log c)$ bits of space per node, and in which each node pulls $O(\log \eta (\log f / \log \log f)^2)$ messages per round.
\end{corollary}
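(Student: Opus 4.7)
The plan is to mirror the recursive construction underlying \theoremref{thm:adaptive}, but apply the probabilistic boosting result \theoremref{thm:layering_randomised} at every iteration instead of the deterministic \theoremref{thm:layering}. I would reuse the same phase scheme: $P$ phases with $k_p = 4\cdot 2^{P-p}$ blocks per layer and $R_p = 2 k_p$ iterations in phase $p$, giving $k_1=\Theta(\log f/\log\log f)$ and a total of $L=\sum_p R_p = O(k_1)$ layers. At every layer, the remark following \theoremref{thm:layering_randomised} allows the probabilistic counter produced in the previous layer to play the role of a deterministic input counter, at the price of a $\eta^{-\kappa}$ failure probability per layer which is absorbed by a union bound over the $L\cdot\poly(\eta)$ relevant (layer, round, node) triples. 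One must also verify the extra side condition $F<N/(3+\gamma)$ required by \theoremref{thm:layering_randomised}; since the recursion always keeps $F$ a factor $n^{o(1)}$ below $N/3$, this is automatic for any fixed $\gamma>0$.

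The stabilisation-time and state-bit recurrences produced by \theoremref{thm:layering_randomised} coincide term by term with those produced by \theoremref{thm:layering}. Consequently the two lemmas inside the proof of \theoremref{thm:adaptive} transfer unchanged and yield $T=O(f)$ and $S=O(\log^2 f/\log\log f + \log c)$, matching the first two bounds claimed in the corollary. No new calculation is required beyond citing these recurrences.

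The genuinely new ingredient is the per-round pull complexity. By the third guarantee of \theoremref{thm:layering_randomised}, in a single layer of phase $p$ each node pulls $O(k_p \log \eta)$ messages. Summing across all $R_p$ iterations of each phase and all phases,
\[
\sum_{p=1}^P R_p \cdot O(k_p \log \eta) \;=\; O\!\left(\log \eta \sum_{p=1}^P k_p^2\right) \;=\; O\!\left(\log \eta \cdot k_1^2\right) \;=\; O\!\left(\log \eta \left(\frac{\log f}{\log \log f}\right)^2\right),
\]
where the geometric decay $k_p = 4\cdot 2^{P-p}$ keeps the inner sum at $O(k_1^2)$. The main obstacle is not any new technical idea but the bookkeeping of the global union bound: one has to verify that taking the per-layer confidence parameter slightly larger than the target $\kappa$ (a constant increase suffices, since $L=O(\log f/\log\log f)$ and we take $\poly(\eta)$ samples at each layer) still yields an overall failure probability of $\eta^{-\Theta(\kappa)}$ per round, so that the composite object satisfies the definition of $\mathcal{P}(n,f,c,\eta,\kappa)$ for arbitrary constant $\kappa$.
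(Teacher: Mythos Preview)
Your proposal is correct and follows essentially the same approach as the paper: rerun the adaptive recursion of \theoremref{thm:adaptive} with \theoremref{thm:layering_randomised} in place of \theoremref{thm:layering}, feed each probabilistic layer into the next via the remark after \theoremref{thm:layering_randomised}, and apply a union bound. The paper's justification is only a couple of sentences; your write-up is actually more explicit on two points---the side condition $F<N/(3+\gamma)$ and the per-round pull count---and your summation $\sum_p R_p\cdot O(k_p\log\eta)=O(\log\eta\cdot k_1^2)$ is exactly what is needed to land on the stated bound $O(\log\eta\,(\log f/\log\log f)^2)$, whereas the paper's sentence ``$k=O(\log\eta)$, so $O(\log^2\eta)$ pulls per layer'' would, taken at face value over $L=O(k_1)$ layers, give a bound looser by a $\log\log f$ factor.
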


We note that it is also possibility to boost the probability of success, and thus the period of stability, by simply increasing the sample size. For instance, sampling $\polylog \eta$ messages yields an error probability of $\eta^{-\polylog \eta}$ in each round, whereas in the extreme case, by ``sampling'' all nodes the algorithm reduces to the deterministic case.

\subsection{Oblivious Adversary}

Finally, we remark that under an \emph{oblivious adversary}, that is, an adversary that picks the set of faulty nodes independently of the randomness used by the non-faulty nodes, we get \emph{pseudo-random} synchronous counters satisfying the following: (1) the execution stabilises with high probability and (2) if the execution stabilises, then all non-faulty nodes will deterministically count correctly. Put otherwise, we can fix the random bits used by the nodes to sample the communication links \emph{once}, and with high probability we sample sufficiently many communication links to non-faulty nodes for the algorithm to (deterministically) stabilise. This gives us the following result.

\begin{corollary}
For any $c>1$, there exist pseudo-random synchronous $c$-counters with a resilience of $f=n^{1-o(1)}$ against an oblivious fault pattern that stabilise in $O(f)$ rounds with high probability, use $O(\log^2 f  / \log \log f + \log c)$ bits of space per node, and in which each node pulls $O(\log \eta (\log f / \log \log f)^2)$ messages per round.
\end{corollary}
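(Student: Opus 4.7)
The plan is to derandomize the algorithm of \corollaryref{cor:random-alg} by fixing the sampling choices once before execution and then arguing via the probabilistic method. Because the adversary picks the faulty set $\cF$ independently of the random bits of the non-faulty nodes, I may swap the order of quantification: for each realization of the random bits the algorithm is deterministic, and I only need to show that a uniformly random realization succeeds against the (independently chosen) $\cF$ with high probability.

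Specifically, each non-faulty node draws, for every layer of the recursive construction and every committee it needs to poll (of size $M=\Theta(\log \eta)$), a multiset of target nodes once and reuses it in every round. Call such a committee \emph{good} if it contains enough non-faulty members that every threshold test in the randomized phase king (\lemmaref{lemma:whp}, \corollaryref{coro:randomised_phase_king}) and in the randomized majority vote (\lemmaref{lemma:leader_probabilistic}) evaluates exactly as the corresponding deterministic-broadcast test would. Thanks to the $\gamma$-slack in the constraint $F<N/(3+\gamma)$ from \sectionref{ssec:randomised-layering} (and $f<n/3$ inside each block), the expected non-faulty fraction in each committee exceeds $2/3$ by a constant, so a standard Chernoff bound makes each committee good with probability $1-\eta^{-\kappa'}$ for any desired constant $\kappa'$. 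A union bound over the $\poly(\eta)$ committees across all non-faulty nodes and all layers then yields that every committee is simultaneously good with high probability.

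Once every committee is good, the sampling-based checks become deterministic in the required sense: when all non-faulty nodes agree on a value $x$, each non-faulty node sees at least $2M/3$ copies of $x$ from its committee, and conversely no value lacking a non-faulty majority can appear $2M/3$ times in any committee. Under this invariant the analogs of \lemmaref{lemma:round-counter}, \lemmaref{lemma:establish_agreement}, and \lemmaref{lemma:maintain_agreement} hold unconditionally, so the entire recursive construction of \theoremref{thm:adaptive} carries over, giving $O(f)$ stabilisation, $O(\log^2 f/\log\log f + \log c)$ state, and the same per-round pull count as \corollaryref{cor:random-alg}. The main conceptual obstacle is precisely that the committees are fixed for \emph{all} rounds, so a single bad committee would doom the algorithm forever; the oblivious-adversary hypothesis is what lets a single high-probability event cover all rounds uniformly, and once it holds the algorithm behaves and is analysed deterministically from that point on.
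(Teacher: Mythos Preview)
Your proposal is correct and follows essentially the same approach as the paper, which merely sketches the argument in the paragraph preceding the corollary: fix the random sampling choices once, observe that against an oblivious adversary the event ``every sampled committee contains enough non-faulty nodes'' depends only on $\cF$ and the fixed bits (not on the round), and apply a union bound over the $\poly(\eta)$ committees so that, with high probability, the algorithm thereafter behaves exactly like the deterministic construction of \theoremref{thm:adaptive}. Your write-up in fact supplies more detail than the paper does, but the idea and the key step---that obliviousness lets one high-probability event cover all rounds---are identical.
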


\section{Conclusions}

In this work, we showed that there exist (1) \emph{deterministic} algorithms for synchronous counting that have (2) linear stabilisation time, (3) use a very small number of state bits while still achieving (4) almost-optimal resilience--something no prior algorithms have been able to do. In addition, we discussed how to reduce the total number of communicated bits in the network, while still achieving (2)--(4) by considering probabilistic and pseudo-random synchronous counters. 

We conclude by highlighting a few open problems:
\begin{enumerate}
  \item Are there randomised or deterministic algorithms with the optimal resilience of $f < n/3$ that use $\polylog f$ state bits and stabilise in $O(f)$ rounds?
  \item Are there deterministic algorithms that use substantially fewer than $\log^2 f$ state bits?
  \item Are there communication-efficient and space-efficient algorithms with high resilience that stabilise quickly in the usual synchronous model?
\end{enumerate}

% --- Back matter ----

\DeclareUrlCommand{\Doi}{\urlstyle{same}}
\renewcommand{\doi}[1]{\href{http://dx.doi.org/#1}{\footnotesize\sf doi:\Doi{#1}}}
\bibliographystyle{plainnat}
\bibliography{counting}

\end{document}